\newcommand{\be}{\begin{equation}}
\newcommand{\ee}{\end{equation}}
\newcommand{\ba}{\begin{eqnarray}}
\newcommand{\ea}{\end{eqnarray}}
\newtheorem{theorem}{Theorem}
\newtheorem{proposition}{Proposition}
\begin{document}
	
	\title{Emergence of Monogamy under Static and Dynamic Scenarios}
	\author{Rivu Gupta$^{1}$, Saptarshi Roy$^{1}$, Shiladitya Mal$^{1, 2}$, Aditi Sen (De)$^{1}$}
	
	\affiliation{$^1$ Harish-Chandra Research Institute and HBNI, Chhatnag Road, Jhunsi, Allahabad - 211019, India\\
	$^2$ Department of Physics and Center for Quantum Frontiers of Research and Technology (QFort),
National Cheng Kung University, Tainan 701, Taiwan
	}
	
	\date{\today}
	
	\begin{abstract}

Characterizing multipartite quantum correlations beyond two parties is of utmost importance for building cutting edge quantum technologies, although the comprehensive picture is still missing. Here we investigate quantum correlations (QCs) present in a multipartite system by exploring connections between \textcolor{black}{three different frameworks, namely,} the monogamy score (MS), localizable quantum correlations (LQC), and genuine multipartite entanglement (GME) content of the state . We  find that the frequency distribution of GME for Dicke states with higher excitations resembles that of random states. We show that there is a critical value of GME beyond which all states become monogamous and it is investigated by considering different powers of MS which provide various layers of monogamy relations. Interestingly, such a relation between LQC and MS as well as GME does not hold. States having a very low GME (low monogamy score, both positive and negative) can localize a high amount of QCs in two parties. We also provide an upper bound to the sum of bipartite QC measures including LQC for random states and establish a gap between the actual upper bound and the algebraic maximum.

	\end{abstract}

	\maketitle

	\section{Introduction}

	Correlations play a fundamental role in providing insight into the laws describing nature at various scales. The features possessed by these correlations depend on the theory under which they have been  analyzed -- some characteristics are common to all the theories while others are exclusive to a particular one.
	Correlations in the quantum domain, commonly referred to as	quantum correlations (QCs), possess many such unique characteristics that are qualitatively different from classical correlations (CCs). From entanglement to nonlocality \cite{HoroRMP, BellRMP}, these special properties constitute and, in turn, help to understand the intricacies of quantum mechanics. Importantly, these specialties of QCs are responsible for fueling tasks like quantum teleportation \cite{teleoriginal}, quantum dense coding \cite{bennettwiesner}, genuine randomness certification \cite{randomness}, quantum computation \cite{onewayQC}, etc. which are impossible via the sole use of CCs.
	
	
Rapid developments in realizing quantum technologies demand complete characterization of multisite entangled states. 	
	\textcolor{black}{
  Since QCs shared between multiple parties might be of different types, it is often difficult to determine a concrete way to assess quantumness in a multipartite state.
  Depending on the available resource, a number of quantum information schemes using shared multipartite quantum correlated states have been designed. For example, 
  For instance, in order to obtain a quantum advantage in both classical and quantum information transmission, it is necessary for multiple senders and receivers to share states that have QCs in the sender-receiver bipartition \cite{bennettwiesner, teleoriginal}, whereas in measurement-based quantum computation, genuine multiparty entangled states become a resource \cite{onewayQC}. Therefore, quantifying QCs from different perspectives and establishing links between them can be crucial for developing quantum technologies. In this work, we focus on three aspects of multipartite QC measures, belonging to three distinct paradigms - $(1)$ genuine multiparty entanglement content, which can be quantified via the geometric structure of multipartite quantum states \cite{Wei_PRA_2003}; $(2)$ monogamy of quantum correlations which constraints the ability of multiple quantum parties to share correlations \cite{CKW, monoreview}; $(3)$ measurement based correlations in which QC is concentrated in fewer parties by performing suitable measurements on the other remaining parties \cite{entass, LC1, LC2}. 
  To carry out the investigation,  we introduce two quantities, the critical GGM, and the critical exponent $\alpha$, beyond which all states satisfy the monogamy inequality,  given by $\delta_{\mathcal{Q}^\alpha} = \mathcal{Q}^{\alpha} (\rho_{1:\mbox{rest}}) - \sum_{i=2}^N \mathcal{Q}^\alpha (\rho_{1:i})$, where $\mathcal{Q}$ is the quantum correlation measure under consideration for a given \(N\)-party state $\rho_{1\ldots N}$, \(\text{rest}\)  in the subscript defines the rest of the parties except the first party and \(\alpha\) is referred to as critical exponent. \\
  We address two important questions to establish a connection between all the correlation measures - (i) is there a threshold on the multiparty entanglement content of Haar-uniformly simulated states above which they all satisfy some monogamy relation for a fixed critical exponent?   It will be interesting to determine if the correlation measure and the exponent used to establish the monogamy relation affect the results. (ii) What is the minimum amount of genuine multiparty entanglement required to localize a high value of bipartite QC? Answering these questions initially requires independent analysis of these measures for randomly generated states which we will study here (see \cite{unimonoSappy} and localizable entanglement for $\alpha = 1$ \cite{Banerjee_PRA_2020, Banerjee_arXiv_2020}) and we will study in this paper. Additionally, we will establish a connection between these seemingly unrelated measures.
   Apart from their fundamental significance, genuine multipartite QC measure, monogamy relations and localizable correlations also possess some utilitarian applications like distinguishing classes of quantum states \cite{WHSV}, in quantum cryptography \cite{dist2, EkertCrypto, Pawlowski} and characterizing phases in many-body systems \cite{monoreview, Monoexpt, LC1, LC2, LC3, LC4, Amitda, LCASDUS, spinLC} (see also  \cite{Koashi, Illuminati2, Aditicap, PianiBruss, Fanchini1, largemono, BaiXuWang, Bartosz, VerstraeteOsborne, salini, HimadriAsutosh, MonoGHZW, SappyTamo, MonoFaithful, GourGuo, sudipto, MonoQubit, LuoLi, TightMono, MonoTri, MonoNeg, MonoNeg2, lightcone, forbidden, bellcorr, tricorrbibell, monocalc, monoreview, tehralMono, unimonoSappy, Ratul}). }\\
	\begin{figure}[h]
		\centering
		\includegraphics[width=\linewidth]{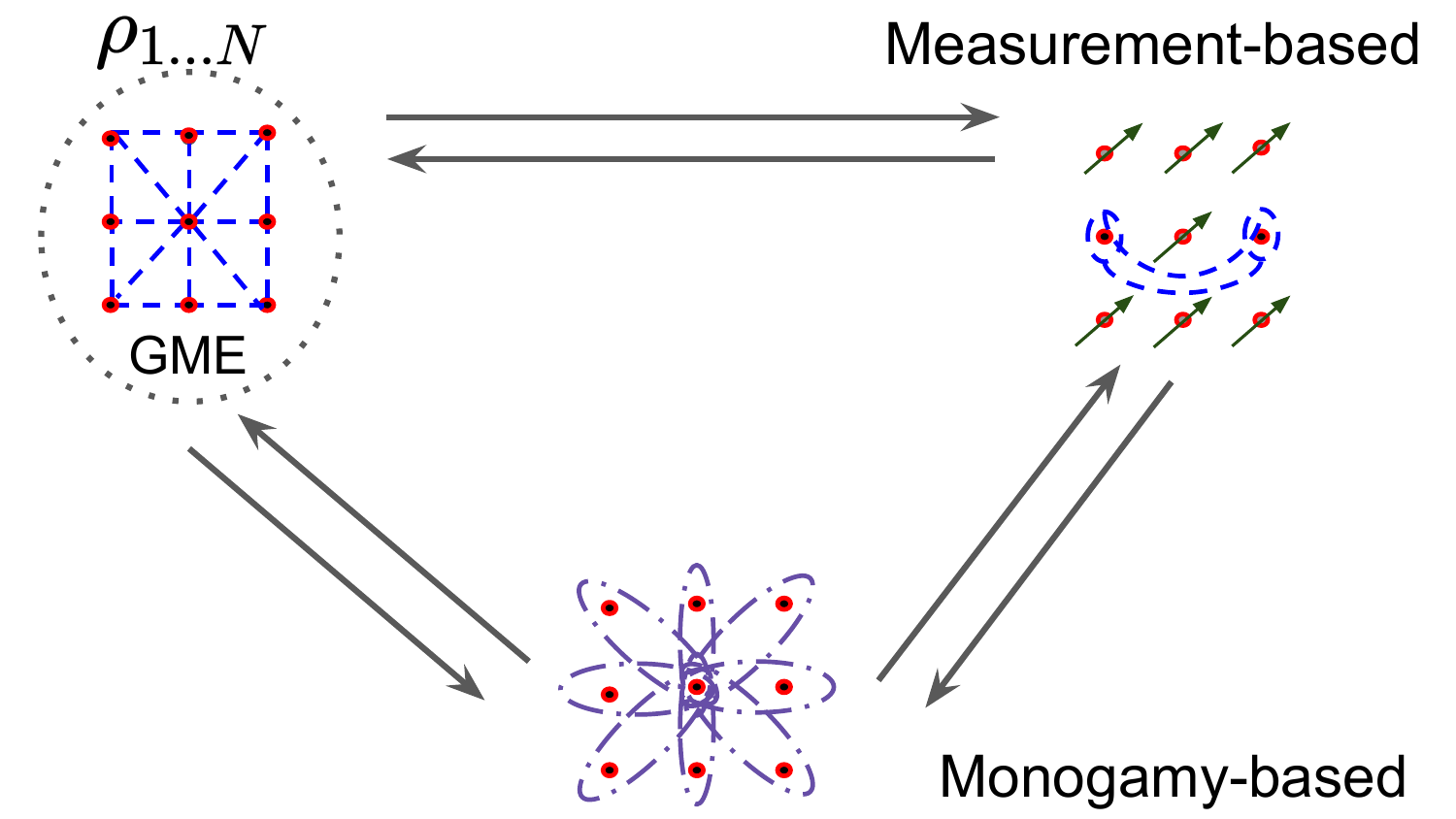}
		\caption{Schematic representation of interplay between different  multipartite quantum correlations, thereby providing classifications  among multipartite quantum correlation measures -- genuine multipartite entanglement, monogamy- and measurement-based quantum correlations are related. Analysis shows that features of multipartite QCs are more prominently present in monogamy-based measures compared to the measurement-based ones, considered in this paper. }
		\label{schematic}
	\end{figure}

 We  report that the critical GGM required to satisfy monogamy inequality decreases with the number of parties for all quantum correlation measures, thereby providing sufficient criteria, independent of QC measures. On the other hand, for states with less number of qubits and a fixed amount of GGM, the critical exponent can assume very high values but as the number of qubits grows, it saturates to its lower limits, thereby showing the increase of quantumness in randomly simulated states with a number of parties.\\	
	We prove that for a fixed amount of genuine multipartite entanglement (GME) content of an arbitrary three-qubit state and the generalized Greenberger-Horne-Zeilinger (gGHZ) state \cite{GHZ}, the monogamy score of entanglement for the former is always lower than that of the gGHZ state. Such an upper bound obtained from the gGHZ state does not hold for states having more than three qubits.  We also show that  Dicke states with higher excitations behave more like the random states while the Dicke states with single or low excitations are not. A usual way of analyzing monogamy is to look at the sum  of all possible bipartite correlations with respect to a particular party of a multipartite state. The monogamous feature is reflected as an upper bound to this sum which turns out to be much smaller than the algebraic maximum of the sum. We provide an estimate of the sum for different bipartite quantum correlation measures of random multipartite states as well as Dicke states, thereby revealing the gap between the algebraic maximum and the actual value which leads to the violation of monogamy inequality.  	
%

 In the case of measurement-based QC measures, we show that even if the original state has low GME as well as low (both positive and negative) monogamy score, substantial quantum correlation can be localized using projective measurements, which becomes more pronounced in  the case of  states having more number of qubits.  We support such observation both qualitatively and  quantitatively by considering the minimum localizable QC  produced from states with a fixed amount of  GME or monogamy score.  
	A slightly contrasting  behavior is observed for randomly generated Dicke states with a single excitation for which states possessing high GME (monogamy score) can always produce a moderate amount of localizable entanglement  although low GME can also achieve high localizable entanglement. This is due to the fact that the sample space of Dicke states having high GGM is low in number with the increase of the number of parties. 
	 Both the results illustrate that the monogamy score can capture the features of multipartite QCs more prominently compared to the measurement-based QCs, thereby showing the interplay between measurement - and monogamy-based measures with GME. We also report that, unlike monogamy scores,  the sum of the localizable QCs of  multipartite random states can reach close to their algebraic maximum, especially for states with a low number of qubits. 
%

 The paper is organized in the following way.  Sec. \ref{sec_monogen_ent} establishes a relation between monogamy of QCs and genuine multipartite entanglement for  random multipartite states by varying parties from three to six, Dicke states  with different excitations  \cite{Dicke} and three-qubit W-class states  \cite{DurVidalCirac}.  We characterize the set of states which are non-monogamous with respect to certain bipartite QC measures, in terms of genuine multipartite entanglement content  in SubSec. \ref{subsec_critGGM}. In Sec. \ref{sec_LE}, we finally relate the three quantities, monogamy score, localizable entanglement, and GGM as well as report an upper bound on the distribution of localizable entanglement. The summary of results and their implications are presented in Sec. \ref{sec_conclu}.

\section{Monogamy vs. Genuine multipartite entanglement}
\label{sec_monogen_ent}

Before providing  the relation, let us first present the prerequisites to carry out the investigation. We first give definitions of monogamy score of an arbitrary QC measure, classes of multiqubit states under study and genuine multipartite entanglement measure. 

\emph{Monogamy of QC. } The restrictions on the distribution of bipartite quantum correlations,  $\mathcal{Q}$, in a multiparty state,  $\rho_{1\ldots N}$,  is referred to as the monogamy of QCs. Quantitatively, it constrains the sum of all bipartite QCs of a quantum state with a given nodal party, say, $1$,  i.e., it provides  an upper bound, $\mathcal{Q}(\rho_{1:\text{rest}})$,  on   $\sum_{i=2}^N \mathcal{Q} (\rho_{1:i})$  where  without loss of  generality, we assume the nodal party to be the first party. 
 Hence, a state is said to be monogamous with respect to \(\mathcal{Q}\) if it  satisfies $\mathcal{Q}(\rho_{1:\text{rest}}) \geq \sum_{i=2}^N \mathcal{Q}(\rho_{1:i})$. This is evaluated via the monogamy score, which for any power, $\alpha$, of a given $\mathcal{Q}$, is defined as \cite{vanmon}
\begin{equation}
	\delta_{\mathcal{Q}^\alpha} = \mathcal{Q}^{\alpha}_{1:\mbox{rest}} - \sum_{i=2}^N \mathcal{Q}_{1:i}^\alpha,
	\label{eq:monogamy}
	\end{equation}
	where \(\mathcal{Q}^\alpha_{1:\mbox{rest}} \equiv \mathcal{Q}^\alpha(\rho_{1:\text{rest}})\) and  $\mathcal{Q}^\alpha_{1:i} \equiv  \mathcal{Q}^\alpha (\rho_{1:i})$. 
In this work, the  QC measures are considered to be  negativity ($\mathcal{N}$), concurrence ($\mathcal{C}$) and quantum discord ($\mathcal{D}$). 

\emph{Simulation of quantum states.} 
An N-qubit random pure state chosen Haar uniformly reads as \cite{Zyczkowskibook}
	\begin{equation}
	|\psi_R \rangle = \sum_{i=1}^{2^N} \xi_i|i_1 i_2...i_N \rangle
	\label{eq:rand_state}
	\end{equation}
	where $\xi_j = a_j + i\, b_j$ with $a_j \; \text{and} \; b_j \in \mathbb{R}$ being sampled from a Gaussian distribution of mean $0$ and unit standard deviation ($\mathbb{G}(0,1)$) and $\{|i_k\rangle\} $s constituting the computational basis. For $N = 3$, the state space splits into two  inequivalent classes of states under stochastic local operations and classical communication, the GHZ- and the W-class states \cite{DurVidalCirac}. The GHZ class states take  the same form as in Eq. \eqref{eq:rand_state}, while the W-class states, constituting a set of measure zero are given by
	\begin{equation}
	|\psi_W \rangle = a |000\rangle + b |001\rangle + c |010\rangle + d |100\rangle,
	\label{eq:Wstate}
	\end{equation}
	where \(a, b, c, d\) are complex numbers whose real parts are  taken from $\mathbb{G}(0,1)$. 
	For states with higher number of qubits, i.e., for  $(N\geq 3)$, we consider another class of states, the Dicke states \cite{Dicke}, which reduces to the generalized W state (obtained from Eq. (\ref{eq:Wstate})  by putting \(a=0\)) for three-qubit case.  A Dicke state of $N$ qubits having $r$ excitations is defined as
	\begin{equation}
	|\psi_D^r\rangle = \sum c_\mathcal{P} \mathcal{P}(|0\rangle^{\otimes (n-r)}\otimes |1\rangle^{\otimes r}),
	\label{eq:Dicke}
	\end{equation}
	where  \(\mathcal{P}\) denotes the permutation of all states with \(n-r\) excitations,  $|1\rangle$ and \(r\) ground states,  $|0\rangle$. The coefficients $c_\mathcal{P} = c_{1_\mathcal{P}} + i c_{2_\mathcal{P}}$ are again chosen from $\mathbb{G}(0,1)$ during their simulation, so that random Haar uniformly  chosen Dicke states are numerically generated.  For four- and five-party states, the excitations are taken to be a single or two while  we have upto three excitations for six-qubit Dicke states.	
	
\emph{Genuine multipartite entanglement: generalized geometric measure.} The genuine multiparty entanglement (GME) content of these random pure states can be computed using the generalized geometric measure (GGM). It is a distance-based measure of GME and is defined as the minimum distance of a given state from the set of all non-genuinely entangled states in the state space \cite{Aditicap, ggmpure1}. For general mixed states,  carrying out the minimization is very hard \cite{ggmmixed}. However, for pure states, the Schmidt decomposition makes the optimization procedure tractable and the GGM can be expressed in terms of Schmidt coefficients in different bipartitions of the multipartite pure state,
$|\psi_N\rangle$,   as
\begin{eqnarray}
\label{Eq:GGM-final-express}
&&\mathcal{G}(|\psi_N\rangle) = \nonumber \\  &&1 - \max \big \lbrace \lambda_{\cal A:B} | {\cal A}\cup {\cal B} = \lbrace 1,2,\ldots, N \rbrace, {\cal A}\cap{\cal B} = \emptyset \big \rbrace, \nonumber \\
\end{eqnarray}
where $\lambda_{\cal A:B}$ is the maximum Schmidt coefficient in the $\cal A : B$
bipartition of $|\psi_N\rangle$, and maximization is performed over all such possible bipartitions. Before exploring the monogamy features, let us discuss some of the GGM characteristics of random states which will make it a convenient reference point when comparisons with the monogamy scores will be made.

	\begin{figure}[h]
		\centering
		\includegraphics[width=\linewidth]{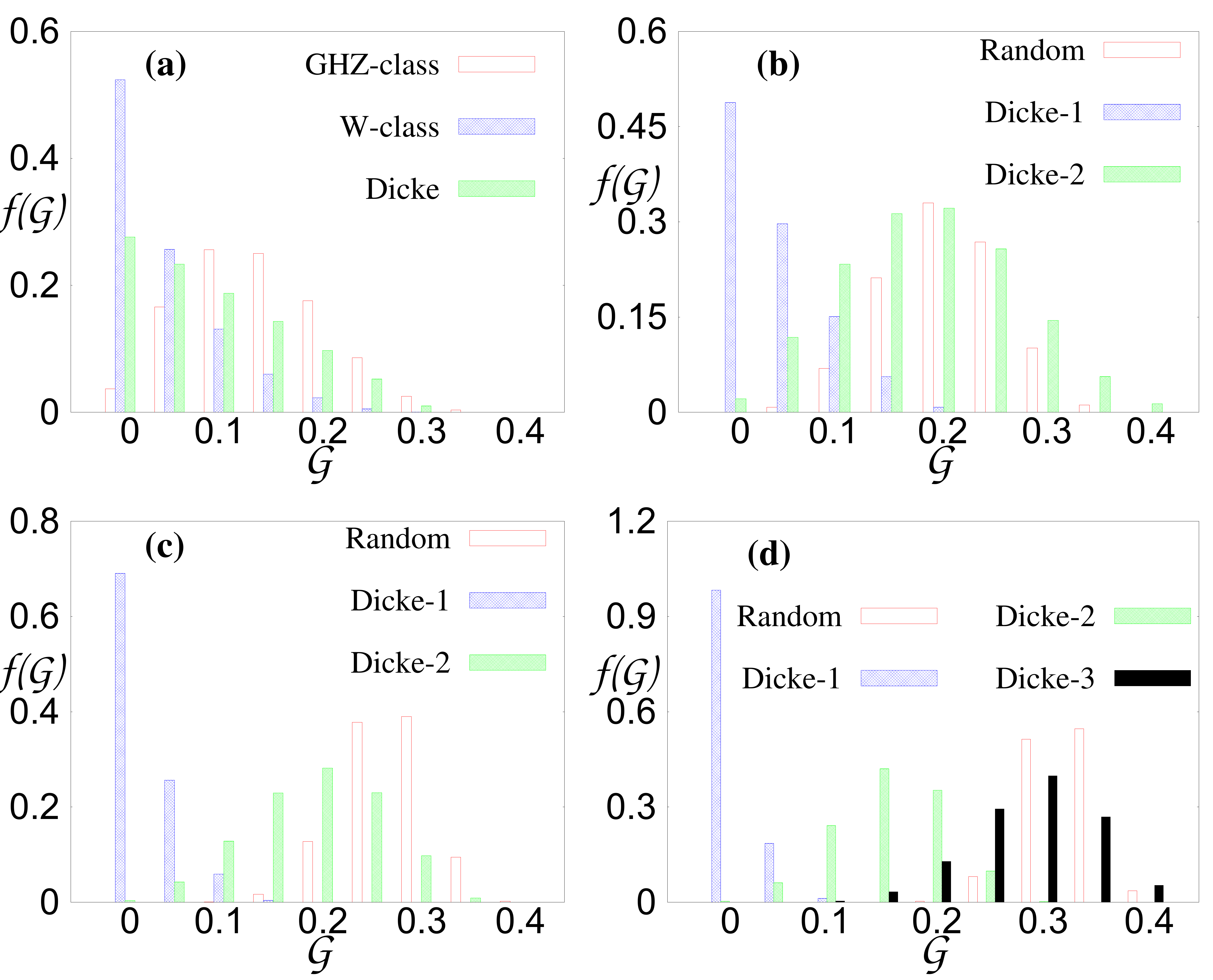}
		\caption{\textcolor{black}{(Normalized) Frequency distribution of GGM, $f(\mathcal{G})$ (ordinate) vs. the GGM, $\mathcal{G}$ (abscissa).  Haar uniformly generated random states (red), random Dicke class with single  (blue),  two (green) and three-excitations  (black) for (a) three-qubit, (b) four-qubit , (c) five-qubit  and (d) six- qubit states are generated. Number of states (all kinds) simulated is \(5\times 10^5\). The ordinate is dimensionless, and the abscissa is in ebits.}}
		\label{fig_ggm_3456}
	\end{figure}

\emph{Frequency distribution of GGM.} To calculate the frequency distribution, \(f(\mathcal{G})\), we count the number of states having GGM between, say, \(a\) and \(b\) which is then divided by the total number of states simulated. In  rest of the paper, wherever we calculate frequency distribution, we use this normalized version. 

For \emph{random pure states} of three- to six-qubits, the distribution takes a bell shape whose mean increases with \(N\) while the standard deviation (SD) decreases with the increase of  number of parties as shown in Table \ref{msdTD} and Fig. \ref{fig_ggm_3456}.  The maximum value of GGM for random states that can be simulated also increases when \(N\) increases from three to six and  it is close to its algebraic maximum, i.e., \(0.5\) for random six-qubits which can also be obtained for the \(N\)-party GHZ state \cite{Aditicap}. 

On the other hand, as one expects,  the trends in frequency distribution for  GGM are drastically different for the \emph{Dicke states} with low excitations, \textcolor{black}{where \textit{Dicke-n} represents the Dicke state with n-excitations}. In particular, if we consider Haar uniformly generated three-qubit W-states,   \(f(\mathcal{G})\)  is steadily decreasing with a peak around \(0-0.05\) which hosts half of the states. The corresponding average GGM is 0.063  with the standard deviation being 0.056. Hence, most of the states  in this class possess a low genuine multipartite entanglement, which reaches its maximum at \(0.326\). The maxima  as well as the average value of GGM for the Dicke states with a single excitation sharply decreases with the increase of the number of parties ( see  Tables \ref{msdTD} and \ref{max_ggm_N}). For example, the fraction of states residing in the  GGM bin of \(0-0.05\) increases with \(N\)-- \(50\%\)  for three-qubits, \(70\%\) for five-qubits and almost all the simulated states for six-qubits. Interestingly, with the increase of excitations in Dicke states, the distribution follows the same pattern as in the random states as we will show in the following proposition. 
	
	

	
	\begin{table}[]
		\caption{Mean and SD  of GGM, $\mathcal{G}$ for random states with different number of parties. } 
		\begin{tabular}{|l|l|l|}
			\hline
			& Mean  & SD    \\ \hline
			3 & 0.162 & 0.069 \\ \hline
			4 & 0.231 & 0.055 \\ \hline
			5 & 0.295 & 0.042 \\ \hline
			6 & 0.347 & 0.031 \\ \hline
		\end{tabular}
		\label{msdTGHZ}
	\end{table}

	\begin{table}[]
		\caption{Mean and SD of $\mathcal{G}$ for random Dicke states having different number of excitations. } 
		\begin{tabular}{|l|l|l|l|l|l|l|}
			\hline
			& \multicolumn{2}{l|}{$|\psi_D^1\rangle$} & \multicolumn{2}{l|}{$|\psi_D^2\rangle$} & \multicolumn{2}{l|}{$|\psi_D^3\rangle$} \\ \hline
			&                    &                    &                    &                    &                    &                    \\ \hline
			& Mean               & SD                 & Mean               & SD                 & Mean               & SD                 \\ \hline
			&                    &                    &                    &                    &                    &                    \\ \hline
			3 & 0.11               & 0.079              &                    &                    &                    &                    \\ \hline
			4 & 0.062              & 0.048              & 0.21               & 0.082              &                    &                    \\ \hline
			5 & 0.039              & 0.033              & 0.22               & 0.066              &                    &                    \\ \hline
			6 & 0.028              & 0.023              & 0.183              & 0.049              & 0.313              & 0.056              \\ \hline
		\end{tabular}
		\label{msdTD}
	\end{table}
	
	\	\begin{table}[]
	\caption{Actual maximum  of $\mathcal{G}$ by varying number of qubits for randomly generated and Dicke states.  } 
	\begin{tabular}{|r|r|r|r|r|}
		\hline
		\multicolumn{1}{|c|}{N} & \multicolumn{1}{c|}{Random} & \multicolumn{1}{c|}{$|\psi_D^1\rangle $} & \multicolumn{1}{c|}{$|\psi_D^2\rangle$} & \multicolumn{1}{c|}{$|\psi_D^3\rangle$} \\ \hline
		\multicolumn{1}{|l|}{}  & \multicolumn{1}{l|}{}    & \multicolumn{1}{l|}{}           & \multicolumn{1}{l|}{}           & \multicolumn{1}{l|}{}           \\ \hline
		3                       & 0.429                    & 0.33                            &                                 &                                 \\ \hline
		4                       & 0.435                    & 0.246                           & 0.45                            &                                 \\ \hline
		5                       & 0.449                    & 0.194                           & 0.397                           &                                 \\ \hline
		6                       & 0.453                    & 0.154                           & 0.325                           & 0.485                           \\ \hline
	\end{tabular}
\label{max_ggm_N}
\end{table}
	

	
	\begin{proposition}
		The average GGM of an \(N\)-qubit Dicke state with \(N/2\) excitations for even \(N\) and \(N/2 +1\) excitations for odd \(N\) is almost the same as that of the random states of \(N\)-qubits, especially in the limit $N \gg 1$. \\
	\end{proposition}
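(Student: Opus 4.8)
The plan is to reduce the GGM of both ensembles to the statistics of their single-qubit reduced states, and then to argue that balanced excitations force those statistics into the same regime as the random case. The starting point is structural: any state supported on the fixed-$r$-excitation subspace, such as $|\psi_D^r\rangle$ in Eq.~\eqref{eq:Dicke}, has a \emph{diagonal} single-qubit marginal, since a coherence $\langle 0|\rho_j|1\rangle$ would connect bit strings of excitation number $r$ and $r+1$ and hence vanishes. Writing $p_1^{(j)}=\sum_{x:\,x_j=1}|c_x|^2$ for the excitation probability of qubit $j$, the largest eigenvalue of $\rho_j$ is simply $\max\{p_1^{(j)},\,1-p_1^{(j)}\}$. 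First I would establish that, for both random and balanced Dicke states, the maximisation in Eq.~\eqref{Eq:GGM-final-express} is \emph{typically} attained on a $1:(N-1)$ cut: across such a cut the reduced state is two-dimensional, so its dominant Schmidt coefficient is at least $\tfrac12$, whereas for random-coefficient states the $k$-qubit marginals with $k\ge 2$ are appreciably more mixed and yield smaller dominant eigenvalues. Granting this, the GGM collapses to $\mathcal{G}\simeq 1-\max_{j}\max\{p_1^{(j)},1-p_1^{(j)}\}$, and the whole question becomes one about the per-qubit excitation probabilities.

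The second step is to exhibit the mechanism that makes balanced Dicke states mimic random ones. For $N$-qubit random Dicke coefficients drawn from $\mathbb{G}(0,1)$, symmetry together with $\sum_j p_1^{(j)}=r$ gives $\mathbb{E}[p_1^{(j)}]=r/N$ for every qubit. Choosing $r=N/2$ (even $N$) or $r=(N+1)/2$ (odd $N$) thus pins the mean excitation probability to $1/2$, up to the $O(1/N)$ offset in the odd case, so that each $\rho_j$ is nearly maximally mixed and $\max\{p_1^{(j)},1-p_1^{(j)}\}\to \tfrac12$, giving $\mathcal{G}\to\tfrac12$ --- exactly the large-$N$ behaviour of random states, whose single-qubit marginals also approach $\tfrac12$ times the identity (Table~\ref{msdTGHZ}). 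By contrast, for low excitations $r=1$ one has $\mathbb{E}[p_1^{(j)}]=1/N$, the marginals are strongly polarised with dominant eigenvalue $\approx 1-1/N$, and $\mathcal{G}\approx 1/N$ decays with $N$; this is precisely the divergent trend recorded in Table~\ref{msdTD} and Fig.~\ref{fig_ggm_3456}, and it pinpoints why only the \emph{balanced} excitation sector tracks the random ensemble.

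The final and most delicate step is to match not just the limiting value but the finite-$N$ average, i.e.\ to show that $\mathbb{E}[\max_j\max\{p_1^{(j)},1-p_1^{(j)}\}]$ for balanced Dicke states coincides with $\mathbb{E}[\max_j \lambda_{\max}^{(j)}]$ for random states, where for the latter $\lambda_{\max}^{(j)}=\tfrac12(1+|\vec r_j|)$ involves the full three-component Bloch vector. Here I would model the normalised weights $\{|c_x|^2\}$ as a Dirichlet distribution (the moduli-squared of i.i.d.\ complex Gaussians), so that each $p_1^{(j)}$ is a sum over the $\binom{N-1}{r-1}$ strings carrying an excitation on qubit $j$, compute the induced fluctuation of $p_1^{(j)}$ about $1/2$, and then pass to the extreme-value statistics over the $N$ qubits. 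The main obstacle is genuinely this matching: the two ensembles have structurally different single-qubit laws --- random states fluctuate over the entire Bloch ball while fixed-excitation states fluctuate only along the $z$-axis --- so the coincidence of their average GGM is not a symmetry identity but a quantitative near-equality. It must be verified either by carrying the Dirichlet and Haar extreme-value integrals to leading order in $1/N$ or, as the qualifier ``almost the same'' in the statement anticipates and as Tables~\ref{msdTGHZ}--\ref{max_ggm_N} confirm, by direct numerical simulation over the $5\times 10^5$ sampled states.
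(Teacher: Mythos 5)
Your reduction of the random-state side to single-qubit marginals matches the paper's strategy (the authors likewise observe that for Haar-random states the maximal Schmidt coefficient comes from a single-party cut, and they estimate it from the average subsystem entropy $\langle S\rangle=\log_2 M - M/2K$). The gap is on the Dicke side: your structural premise that the maximisation in Eq.~\eqref{Eq:GGM-final-express} is ``typically attained on a $1:(N-1)$ cut'' for balanced Dicke states is false, and the failure is exactly where the proposition lives. Because $|\psi_D^r\rangle$ is supported on the symmetric weight-$r$ sector, its two-qubit marginal retains a large coherence in the $\{|01\rangle,|10\rangle\}$ block; for the equal-coefficient state with $r=N/2$ the paper (citing the $2:(N-2)$ bipartition result) gets a maximal eigenvalue $N/2(N-1)=\tfrac12+\tfrac{1}{2(N-1)}$, which strictly exceeds the single-qubit-cut eigenvalue $\max\{p_1^{(j)},1-p_1^{(j)}\}=\tfrac12+O(\text{fluctuation})$ precisely in the regime where your concentration argument applies. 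So the GGM of the balanced Dicke state is governed by the two-qubit cut and equals $(N-2)/2(N-1)$ (e.g.\ $1/3$ at $N=4$, $0.4$ at $N=6$), not the value $\approx\tfrac12$ your single-qubit reduction would predict. The entire content of the proposition is the numerical near-coincidence of $(N-2)/2(N-1)$ with the Page-type estimate $1-x$ for random states ($0.3$ at $N=4$, $0.395$ at $N=6$); your route never produces the first of these two numbers and therefore cannot establish the claimed near-equality.

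Two smaller points. First, your Dirichlet fluctuation analysis is not negligible at the $N$ studied here (for $N=4$, $r=2$ the per-qubit excitation probability has standard deviation $\approx 0.19$), so even the asymptotic statement $\mathcal{G}\to\tfrac12$ along the single-qubit cut would need the extreme-value computation you defer, and it would still have to be compared against the two-qubit cut before you could identify which bipartition wins. Second, your diagnosis of the $r=1$ case (polarised marginals, $\mathcal{G}\approx 1/N$) is correct and consistent with Table~\ref{msdTD}, but it is peripheral to the proposition. The fix is to treat the symmetric $k$-qubit marginals of the Dicke state explicitly --- at minimum the $k=2$ case --- rather than assuming the random-state phenomenology (single-qubit cut dominates) carries over to a permutation-symmetric, fixed-excitation ensemble.
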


\begin{proof}

\textcolor{black}{Let us first consider the situation when \(N\) is even. The logic behind the statement remains similar for odd \(N\). For an \(N\)-qubit Dicke state comprising \(N/2\) excitations having equal coefficients, the maximum eigenvalue comes from the $2:N-2$ bipartition and is given by $N/2(N-1)$ \cite{ggmfrust}. Thus the GGM of such a state is $\mathcal{G}^{eq} = (N-2)/2(N-1)$, where the superscript "eq" indicates that the coefficients are all equal.  To obtain the GGM of an \(N\)-qubit random state, we observe from our numerical calculations that the largest eigenvalue  comes from a single-party reduced density matrix. To that end,  we try to estimate it by approximating the average value  of  the von Neuman entropy (given by \(- \mbox{tr} \rho \log_2 \rho\)) of the reduced state, using the formula \cite{Kendon}
\begin{equation}
	\langle S \rangle = \log_2(M) - \frac{M}{2K} = 1 - \frac{1}{2^{N-1}},
	\label{Zyczentropy}
	\end{equation}
	where $M = 2$ is the dimension of the density matrix of the one qubit reduced state and $MK = 2^N$ represents the total dimension of the pure state  from which the reduced system is obtained upon tracing out. Since $\langle S \rangle$ is the entropy of a single qubit state, we can find the largest eigenvalue, say, $(1 - x)$ by solving
	\begin{equation}
	-x\log_2(x) - (1-x)\log_2(1 - x) = \langle S \rangle.
	\label{entropy_largex}
	\end{equation}
	Then, the average GGM of the random state is, $\langle\mathcal{G}\rangle  = x$. \\
 In the limit of a large number of parties, i.e., $N \gg 1$, we can approximate $\mathcal{G}^{eq}$ as
 \begin{eqnarray}
     \mathcal{G}^{eq}  = \frac{1}{2} \Big(\frac{N - 2}{N - 1}\Big) = \frac{1}{2}\Big(1 - \frac{2}{N}\Big)\Big(1 - \frac{1}{N}\Big)^{-1} \approx \frac{1}{2}\Big(1 - \frac{1}{N}\Big),~~~~~~
     \label{eq:G_eq-approx}
 \end{eqnarray}
where we have ignored terms of $\mathcal{O}(\frac{1}{N^2})$ and lower. In the same limit, since $\langle S \rangle \to 1$, we have $x \to 1/2 $ and thus, we can  consider $x = \Big( \frac{1}{2} - \frac{1}{2^K}\Big)$ with $K \to \infty$. Therefore, Eq. \eqref{entropy_largex} becomes
\begin{eqnarray}
    \langle S \rangle \approx 1 - \frac{1}{2^{2K}} = 1 - \frac{1}{2^{N-1}},
    \label{eq:entropy_approx}
\end{eqnarray}
which implies $K = N-1/2$ and 	$\langle\mathcal{G}\rangle  = \frac{1}{2}\Big(1 - \frac{1}{2^{N - 1/2}}\Big)$. Here, we have used the identity $\log_2(1 \pm \epsilon) \approx \pm \epsilon$ when $\epsilon \to 0$. Therefore, the difference between the GGM of the Dicke state and the random state for large $N$ scales as
\begin{eqnarray}
    \langle\mathcal{G}\rangle - \mathcal{G}^{eq} \approx \frac{1}{2 N} - \frac{1}{2^{\frac{N + 1}{2}}} \approx \frac{1}{2N},
    \label{eq:GGM_diff}
\end{eqnarray}
which becomes vanishingly small as $N$ assumes larger and larger values, with both $\langle\mathcal{G}\rangle$ and $\mathcal{G}^{eq}$ tending towards $0.5$. Hence the proof.}
\end{proof}
	
   	
   	We will discuss the distribution of localizable QCs in subsequent sections, but before that, we shall be investigating the connection between monogamy score and GGM of a multiparty entangled state.
	

   \subsection{Relationship between monogamy score and GGM}


To establish a connection between monogamy scores   in Eq. \eqref{eq:monogamy} with respect to negativity, concurrence and quantum discord for various values of the exponent, $\alpha$ and GGM, we address the following questions:
	\begin{itemize}
		\item Is there a pattern in the distribution of non-monogamous states in terms of their GGM content? How does that depend on the exponent, \(\alpha\)? 
		
		\item Is it possible to find a critical value of GGM beyond which no non-monogamous states are present (Eq. \eqref{Gc}) and is it independent of the choice of QC measure for a fixed exponent?  An answer to this question can shed light on the properties of the non-monogamous nature of QC measures, thereby giving a sufficient condition on states satisfying monogamy relation in terms of GME. As we know, qualitatively and in an extreme situation, bipartite quantum states having maximal QCs follow the monogamy relation. A possible reason can be that the violation obtained is due to the stringent bound that we put on  \(\sum_{i=2}^{N} \mathcal{Q}_{1i}\). Hence, it will also be interesting to find the  actual upper bound on the sum for random states.
		
		\item If  the distribution of GGM with respect to monogamous and non-monogamous states is considered, depending on the set of states, how does such distribution change? 
	\end{itemize}

To examine the relational properties of randomly generated states, we define the following quantities. Firstly, we segregate the random states into bins possessing definite ranges of GGM values and compute the fraction of non-monogamous states in each bin, which, in turn, is computed as
	\begin{equation}
	f^{NM}_{\mathcal{Q}^{\alpha}} = \frac{\text{Number of non-monogamous states}}{\text{Total number of states within GGM range}},
	\label{frac_s}
	\end{equation}
	for a fixed QC measure. \textcolor{black}{The fraction of monogamous states is in turn, given by $f^{M}_{\mathcal{Q}^{\alpha}} = 1 - f^{NM}_{\mathcal{Q}^{\alpha}}$}.
	Such a quantities is useful to address the first and the last questions while we
compute the content of GGM above which  all randomly generated states turn out to be monogamous which we refer to as critical value of GGM, given by 	
	\begin{equation}
	\mathcal{G}_c = \text{maximum GGM beyond which} \; \;  \delta_Q^\alpha \geq 0,
	\label{Gc}
	\end{equation}
	to obtain the answer to the second one.
  Our aim is to find the change that occurred in the  critical GGM depending on the choice of  the QC measure, $\mathcal{Q}$, and power $\alpha$ in monogamy score. 
  
	\begin{figure}[h]
		\centering
		\includegraphics[width=\linewidth]{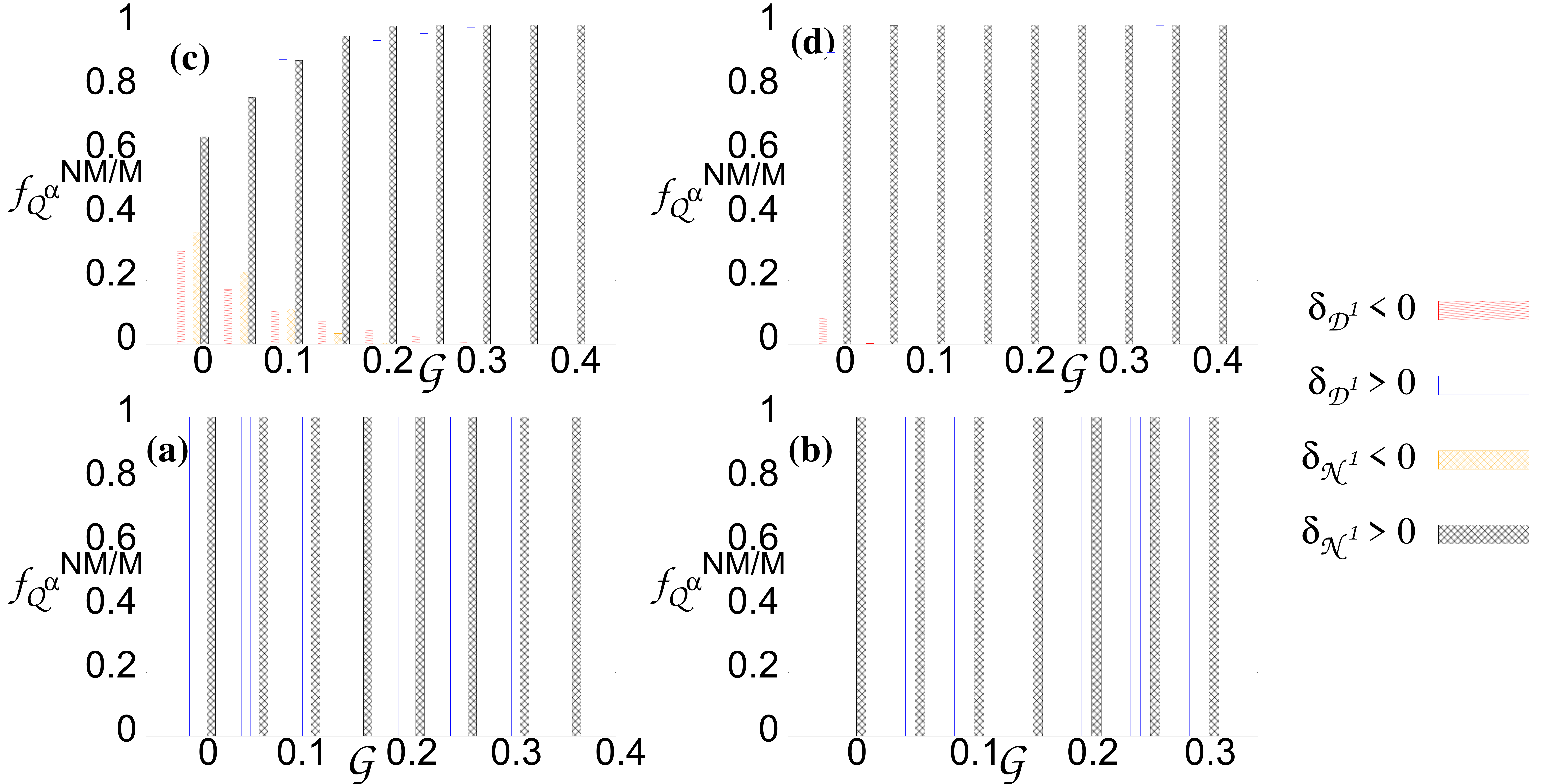}
		\caption{\textcolor{black}{Frequency distribution of both non-monogamous, $f^{NM}_{\mathcal{Q}^{\alpha}}$, and monogamous, $f^{NM}_{\mathcal{Q}^{\alpha}}$,  states (ordinate) with the GGM, $\mathcal{G}$ (abscissa). Both discord $\delta_{\mathcal{D}^1}$ and negativity $\delta_{\mathcal{N}^1}$ monogamy scores are studied with \(\alpha =1\). All other specifications are the same as in Fig. \ref{fig_ggm_3456}}. 
}
		\label{ds1g3}
	\end{figure}

	\begin{figure}[h]
		\centering
		\includegraphics[width=\linewidth]{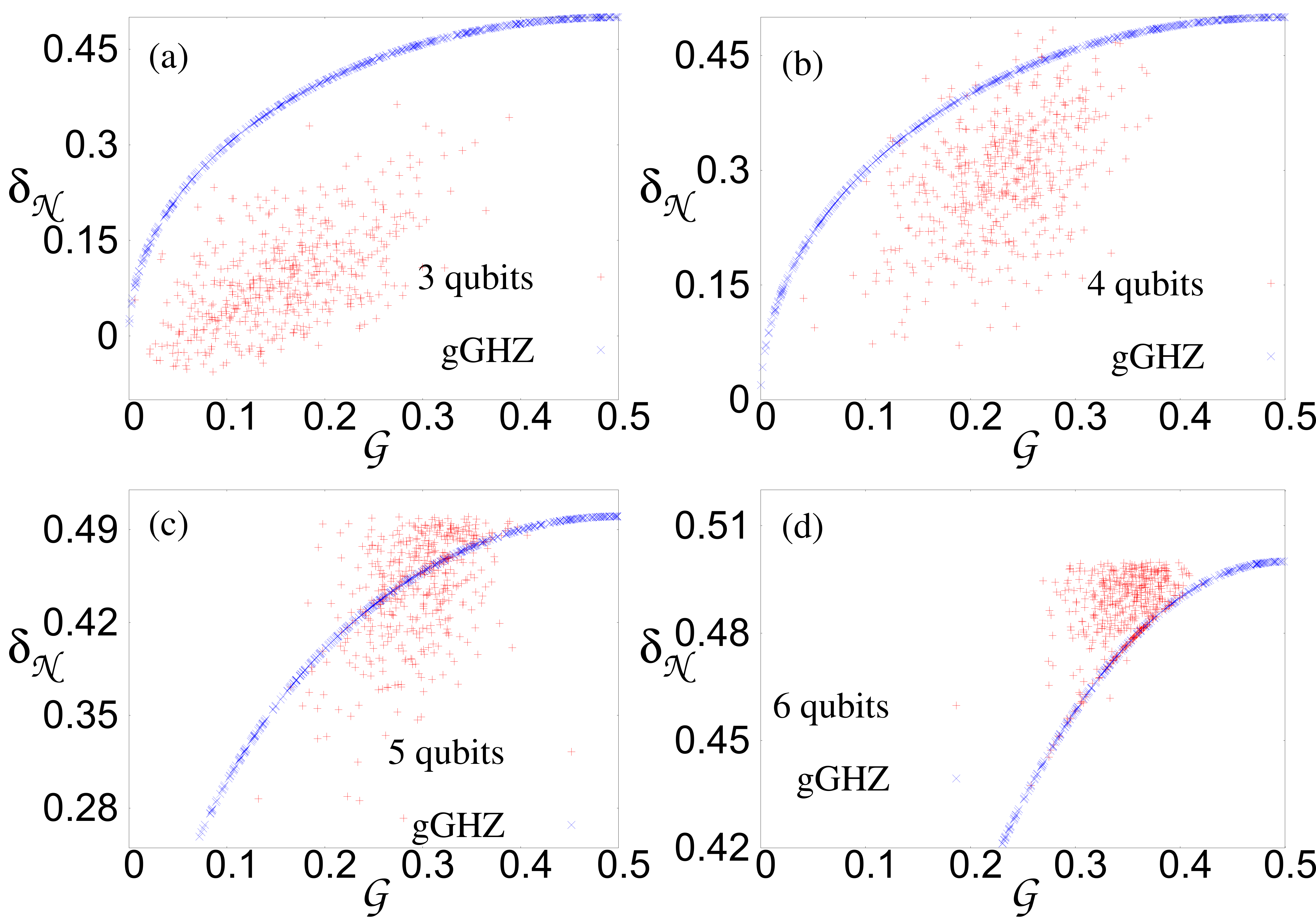}
		\caption{\textcolor{black}{$\delta_{\mathcal{N}}$ (red)  (vertical axis) against $\mathcal{G}$ (horizontal axis). Negativity monogamy scores for random three-qubit (a), four-qubit (b), five-qubit (c), and six-qubit (d)  states for a given $\mathcal{G}$ are plotted. The blue solid line represents the gGHZ state.  Both axes are in ebits.} }
		\label{neg_score_1}
	\end{figure}
	
	\begin{figure}[h]
		\centering
		\includegraphics[width=\linewidth]{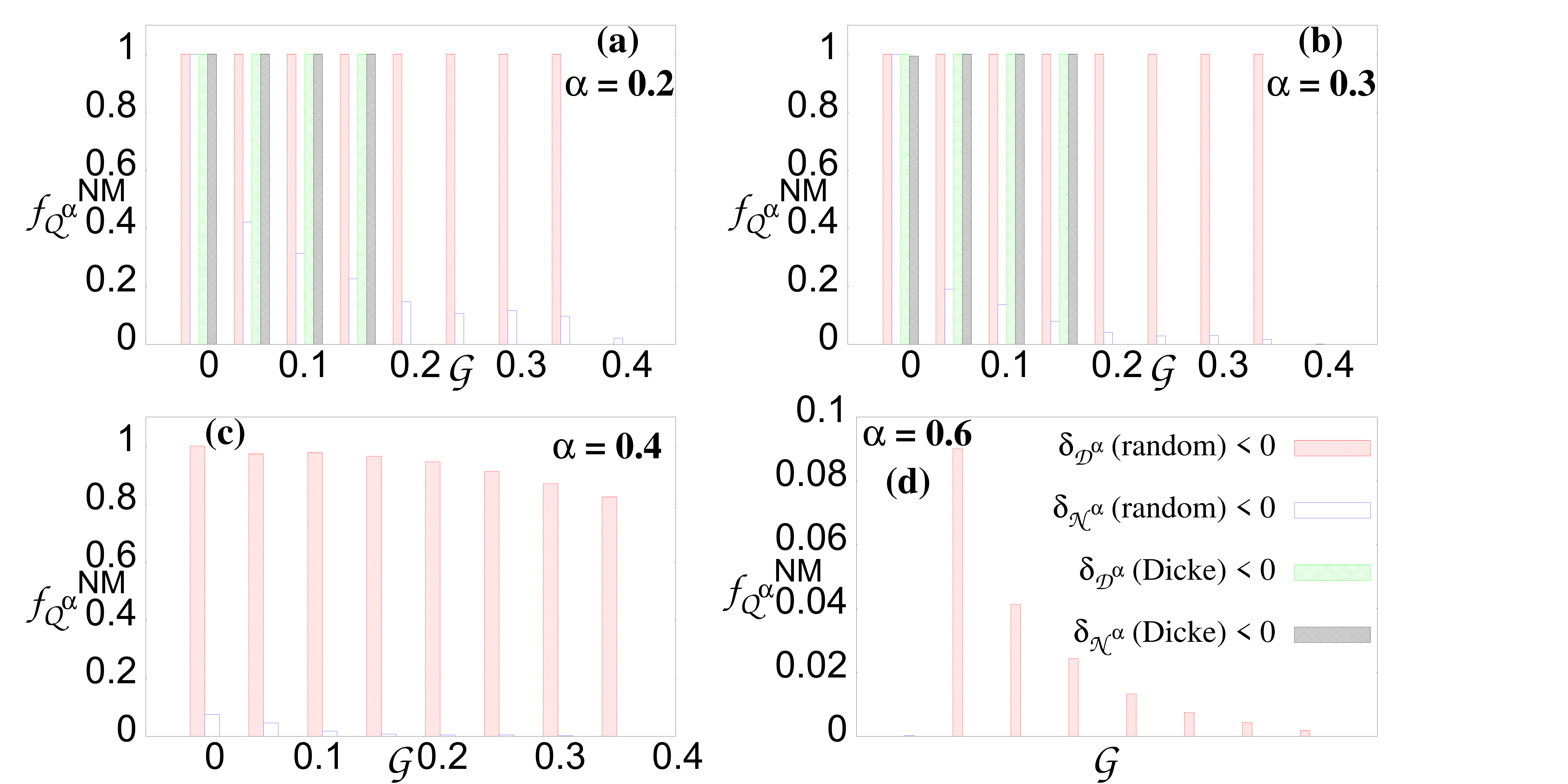}
		\caption{\(f^{NM}_{\mathcal{Q}^{\alpha}}\) (vertical axis) vs.  $\mathcal{G}$  (horizontal axis).    $\delta_{\mathcal{D}}^\alpha$ and $\delta_{\mathcal{N}}^\alpha$ are plotted for different \(\alpha\) values, for five-qubit random states (red and blue) and five-qubit Dicke states with one excitation (green and black) for (a) $\alpha = 0.2$, (b) $\alpha = 0.3$, (c) $\alpha = 0.4$, and (d) $\alpha = 0.6$.  The \(y\)-axis is dimensionless while the \(x\)-axis is in ebits.   }
		\label{disc_score_frac5}
	\end{figure}
	

	\begin{figure*}
		\centering
		\includegraphics[width=0.9\linewidth]{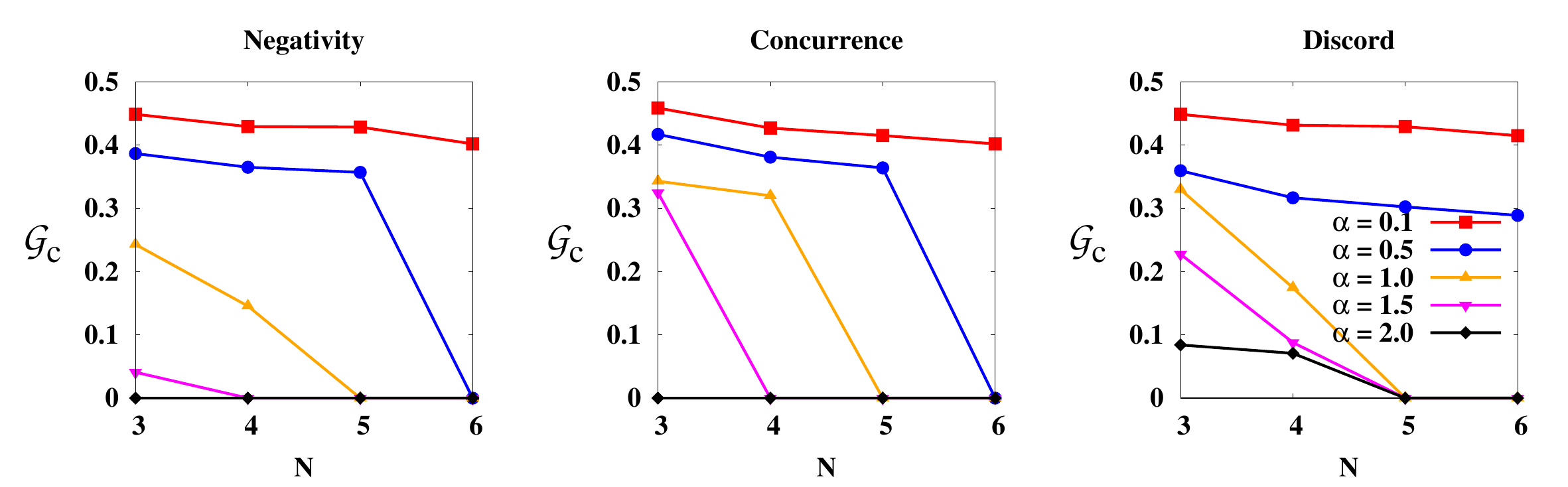}
		\caption{The critical value of GGM above which all the states are monogamous for  fixed QC measures, i.e.,  $\mathcal{G}_c$ ($y$-axis) with respect to  $N$ ( $x$-axis). 
		As a QC measure, we consider negativity (left), concurrence (middle) and discord  (right). Different exponents, \(\alpha\),  involved in monogamy are considered. The horizontal axis is dimensionless while the vertical one is in ebits. }
		\label{g_c_a}
	\end{figure*}

	
	\subsubsection{Random states} 
Let us first resolve the questions for random Haar uniformly generated states. 	As we will show, completely different picture emerges for a specific class of states. 
If we first  focus on the non-monogamous states as they vary with their respective GGM content, we observe that with increase in the number of parties, non-monogamous states cease to exist, especially for entanglement. Also, at higher values of GGM, such states decrease in number, especially in case of entanglement but for discord, $\delta_{\mathcal{D}^1}$  stays negative for a larger GGM range (see Fig \ref{ds1g3}). With an increase in the number of parties, the minimum monogamy score goes from being negative to positive and the corresponding non-monogamous states possess low amount of genuine multipartite entanglement.  This is possibly expected, since with more number of parties, the inherent quantum  correlations present in the system increase and non-monogamous states exist only at small values of multipartite entanglement as depicted in Fig. \ref{neg_score_1}. 
 Moreover, we find that  the gGHZ state provides an upper bound for three-qubit pure states which we will  prove analytically both for negativity- and concurrence-monogamy score. Interestingly, as shown in  Fig. \ref{neg_score_1}, the upper bound does not hold with an increase in the number of parties.

	\begin{theorem}
	For  random three-qubit pure states, $|\psi \rangle$, which have the same GGM as the generalized GHZ state $|\psi^{GG}_3\rangle$, the entanglement monogamy score is bounded above by that of the gGHZ state.
\end{theorem}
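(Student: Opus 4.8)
The plan is to reduce the statement to a single inequality about bipartite entanglement and then settle it cleanly for concurrence and more delicately for negativity. First I would fix the reference point. Writing the matched gGHZ state as $|\psi_3^{GG}\rangle=\sqrt{1-\mathcal{G}}\,|000\rangle+\sqrt{\mathcal{G}}\,|111\rangle$ with $\mathcal{G}\le 1/2$, every two-qubit reduction $\rho_{12},\rho_{13}$ is diagonal in the product basis and hence separable, so $\mathcal{N}_{1:2}=\mathcal{N}_{1:3}=\mathcal{C}_{1:2}=\mathcal{C}_{1:3}=0$ and the monogamy score collapses to the nodal-cut term alone, $\delta_{\mathcal{Q}^\alpha}(\psi_3^{GG})=\mathcal{Q}^{\alpha}_{1:\text{rest}}$. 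Since a three-qubit pure state has only two Schmidt coefficients $\{p_1,1-p_1\}$ across the $1:\text{rest}$ cut with $p_1\equiv\lambda_{\max}(\rho_1)$, and matching the GGM forces $\lambda_{\max}(\rho_i)=1-\mathcal{G}$ for all $i$ in the gGHZ, this reference value equals $((1-\mathcal{G})\mathcal{G})^{\alpha/2}$ for negativity and $2^{\alpha}((1-\mathcal{G})\mathcal{G})^{\alpha/2}$ for concurrence.

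The structural point is that the GGM only fixes $\max_i\lambda_{\max}(\rho_i)=1-\mathcal{G}$, whereas the nodal-cut term of an arbitrary state depends on $p_1$ alone. If party $1$ attains the maximum, so $p_1=1-\mathcal{G}$, the bound is immediate: dropping the nonnegative two-body terms gives $\delta_{\mathcal{Q}^\alpha}(\psi)\le\mathcal{Q}^{\alpha}_{1:\text{rest}}=\delta_{\mathcal{Q}^\alpha}(\psi_3^{GG})$. The entire difficulty lives in the complementary case $p_1<1-\mathcal{G}$, where the nodal cut is \emph{more} entangled than the gGHZ's, so that $\mathcal{Q}^{\alpha}_{1:\text{rest}}(\psi)$ already overshoots the target and one must show that the subtracted terms $\mathcal{Q}^{\alpha}_{1:2}+\mathcal{Q}^{\alpha}_{1:3}$ compensate. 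This is the main obstacle, and the two measures behave very differently here.

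For concurrence I would use the three-tangle. At $\alpha=2$ the monogamy score equals the CKW residual $\tau_3=\mathcal{C}^2_{1:\text{rest}}-\mathcal{C}^2_{1:2}-\mathcal{C}^2_{1:3}$, which is permutation invariant; evaluating it with the nodal party chosen as the one attaining $\lambda_{\max}$ gives $\tau_3\le\mathcal{C}^2_{j:kl}=4(1-\mathcal{G})\mathcal{G}$, precisely the gGHZ value, with no case split needed. For $\alpha=1$ I would combine this bound with $\mathcal{C}_{1:2}+\mathcal{C}_{1:3}\ge(\mathcal{C}^2_{1:2}+\mathcal{C}^2_{1:3})^{1/2}$ and the identity $\mathcal{C}^2_{1:2}+\mathcal{C}^2_{1:3}=\mathcal{C}^2_{1:\text{rest}}-\tau_3$: setting $u=p_1(1-p_1)$ and $v=(1-\mathcal{G})\mathcal{G}$, the claim collapses to $\sqrt{u}\le\sqrt{v}+\sqrt{u-v}$, valid for every $u\ge v\ge 0$, which closes the obstructive case.

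For negativity the same skeleton applies, with $\mathcal{N}_{1:\text{rest}}=\tfrac12\mathcal{C}_{1:\text{rest}}=\sqrt{p_1(1-p_1)}$ on the pure cut, but the per-party negativity residual $\pi_N=\mathcal{N}^2_{1:\text{rest}}-\mathcal{N}^2_{1:2}-\mathcal{N}^2_{1:3}$ (nonnegative by the Ou--Fan negativity monogamy relation) is \emph{not} permutation invariant, so the relabelling trick that tamed $\tau_3$ is unavailable. The crux for negativity, at any $\alpha$, is to show $\pi_N\le(1-\mathcal{G})\mathcal{G}$, after which the two-body lower bound $\mathcal{N}^2_{1:2}+\mathcal{N}^2_{1:3}=\mathcal{N}^2_{1:\text{rest}}-\pi_N\ge u-v$ feeds into the identical radical inequality. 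I expect this upper bound on $\pi_N$ to be the hard step: because negativity and concurrence are not proportional for mixed two-qubit states, I would fall back on the five-parameter canonical form of a three-qubit pure state, express $p_1$, the GGM, and the two reduced negativities explicitly in the canonical amplitudes, and verify $\pi_N\le(1-\mathcal{G})\mathcal{G}$ by optimizing at fixed GGM, thereby confirming the gGHZ as the maximizer.
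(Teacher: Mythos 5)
Your diagnosis of where the difficulty sits is sharper than the paper's own treatment. The paper's proof simply drops the nonnegative two--body terms, giving $\delta_{\mathcal{N}}\le\mathcal{N}_{1:23}=\sqrt{\lambda_1^2(1-\lambda_1^2)}$, and then splits on whether the nodal party attains the GGM; in the non--nodal case it closes with the step $\sqrt{\lambda_1^2(1-\lambda_1^2)}\le\sqrt{\lambda_2^2(1-\lambda_2^2)}$, justified by $\lambda_2^2\ge 1/2$. But since $\lambda_1^2$ is the largest eigenvalue of a qubit marginal it also satisfies $\lambda_1^2\ge 1/2$, and $x(1-x)$ is decreasing on $[1/2,1]$, so for $1/2\le\lambda_1^2\le\lambda_2^2$ that inequality actually runs the other way --- this is precisely your ``$p_1<1-\mathcal{G}$'' case, where the nodal cut overshoots the gGHZ value and the subtracted terms must do real work. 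Your concurrence argument handles this correctly and completely: permutation invariance of the CKW residual $\tau_3$ gives $\tau_3\le 4(1-\mathcal{G})\mathcal{G}$ by evaluating at the party attaining the maximum, and the Cauchy--Schwarz step plus the radical inequality $\sqrt{u}\le\sqrt{v}+\sqrt{u-v}$ (with $u\ge v$ guaranteed by $1/2\le p_1\le 1-\mathcal{G}$) settles $\alpha=1$. That is a genuinely different, and more robust, route than the paper's.

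The gap is in the negativity branch, which is the measure the theorem is actually stated and proved for in the paper. Your reduction of that case to the single inequality $\pi_N=\mathcal{N}^2_{1:23}-\mathcal{N}^2_{1:2}-\mathcal{N}^2_{1:3}\le(1-\mathcal{G})\mathcal{G}$ is a reasonable strategy, but you do not prove it: you correctly observe that $\pi_N$ is not permutation invariant, so the $\tau_3$ trick is unavailable, and then defer to ``optimizing over the five--parameter canonical form,'' which is a plan rather than an argument. As it stands the negativity case is open in your write--up. Two remarks that may help you close it: (i) the paper's factor--of--two identification $\mathcal{N}=\mathcal{C}/2$ holds only across the pure $1:23$ cut, not for the mixed two--qubit marginals, so you cannot simply import the concurrence result; (ii) for the rank--$2$ two--qubit marginals of a three--qubit pure state one can lower--bound $\mathcal{N}_{1:i}$ by an explicit function of $\mathcal{C}_{1:i}$, which may let you recycle the tangle bound instead of a parameter sweep. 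Until one of these is carried out, the negativity half of the theorem is not established by your proposal.
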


\begin{proof}

The reduced density matrices of the gGHZ state, $|\psi^{GG}_3\rangle = \beta |000 \rangle + \gamma |111 \rangle$, are separable and hence the monogamy score for negativity reduces to 
\begin{eqnarray}
\delta_\mathcal{N}^{GG} = \mathcal{N}^{GG}_{1:23} = \sqrt{\beta^2 (1 - \beta^2)},  
\label{delta_3_GG}
\end{eqnarray}
assuming \(\beta^2 \geq 1/2 \geq \gamma^2\), 
while  its GGM is always given by $\mathcal{G}(|\psi_{GG}^3\rangle) = 1 - \beta^2$, since it is symmetric with respect to the permutation of parties. On the other hand, suppose the tripartite state has Schmidt coefficient, \(\lambda_1^2 \geq 1/2 \) in \(1:rest\)-bipartition and GGM comes from that bipartition. If the GGM of gGHZ and arbitrary tripartite state coincide,  we have $\lambda_1^2 = \beta^2$. Moreover, 
\begin{eqnarray}
&& \delta_\mathcal{N} \leq  \mathcal{N}_{1:23} = \sqrt{\lambda_1^2 (1 - \lambda_1^2)} = \delta_\mathcal{N}^{GG}
 \label{delta_3_R} 
\end{eqnarray}
and hence the proof. In a similar fashion, one can get the proof for concurrence as for pure states,  negativity and concurrence are different by a factor of $2$. 

%
%

 Let us assume that the largest eigenvalue contributing to the GGM of the random state comes from a  party, other than the nodal party, i.e.,  $\mathcal{G}(|\psi_{R}^3\rangle) = 1 - \lambda_2^2$ with \(\lambda_1 <  \lambda_2\) and \( \lambda_2^2 \geq 1/2\). Again, we find  $\lambda_2^2 = \beta^2$ by equating GGM for the gGHZ and arbitrary state. Thus, from Eq. \eqref{delta_3_R}, we have $\delta_\mathcal{N} \leq  \sqrt{\lambda_1^2 (1 - \lambda_1^2)} \leq \sqrt{\lambda_2^2 (1 - \lambda_2^2)}  = \sqrt{\beta^2 (1 - \beta^2)} = \delta_\mathcal{N}^{GG}$ and the second inequality is due to the fact that \( \lambda_2^2 \geq 1/2\). 
	

\end{proof}

The relation between monogamy score and GGM changes drastically when the power involved in monogamy is taken less than unity \cite{unimonoSappy}. Specifically, 
when $\alpha \leq 0.5$, all states violate monogamy relation, i.e. $\delta_{Q^{\alpha \leq 0.5}} < 0$ and states having high genuine multipartite entanglement can also violate the monogamy relation as depicted in Fig. \ref{disc_score_frac5}. For $0.5 < \alpha <1.0 $,  fraction of such states decreases and again similar pattern as described before for \(\alpha =1\) emerges. 
 For $\alpha > 1.0$, almost all states are monogamous, especially for states with five or more qubits, irrespective of QC measures 	(see Fig. \ref{disc_score_frac5}) \cite{largemono}. \\
	
	\subsubsection{W class} 
	Among the states from the three-qubit W-class, the range of multipartite entanglement for which non-monogamous states exist is larger for a given QC measure, than the random states. The fraction of such states is also larger for a particular GGM interval. Thus, the critical GGM is also higher in this case compared to random states.
	
	Considering negativity and concurrence, we see that when $\alpha < 1$, a significant percentage of states remains non-monogamous while with $\alpha \geq 1$, the number of such states decreases with $\mathcal{G}$ but non-monogamous states exist for substantially high values of the exponent, upto $1.9$. In case of discord, however, states violating the monogamy inequality exist for all values of  exponent upto \(\alpha =3\), although the number is decreasing with GGM for $\alpha > 1.0$, provided the measurements are done on the nodal party, i.e., the first party in our case.

	\subsubsection{Dicke states} As the number of excitations and parties increase, the situation is similar to the random states as already argued for GGM. 
	The non-monogamous states fall in fraction more and more sharply and with an increase in the number of parties, the GGM range for the existence of such states also decreases. For five- and six-qubits, all states become monogamous for two or more excitations when $\alpha \geq 1$. It indicates that multipartite quantum correlations get enhanced with an increase in excitations and they behave in a similar fashion to  random states. Similarly, Dicke states having  high excitations and multipartite entanglement content can violate monogamy score with low \(\alpha\) which does not remain true when \(\alpha\) is increased. On the other hand, Dicke states with a single or low excitations show a large fraction of states to be non-monogamous even for a moderate \(\alpha\) values. (see Fig. \ref{disc_score_frac5})



%

	\subsection{Criticalities in GGM and Monogamy power }
	\label{subsec_critGGM}
	
	An interesting feature in the relationship between GGM and monogamy score is the existence of a critical value of GGM, $\mathcal{G}_c$ as defined in Eq. \eqref{Gc}. It means that if a random state possess a GGM value above $\mathcal{G}_c$, it is guaranteed to be monogamous. We track the changes in the values of $\mathcal{G}_c$ with the number of  parties, $N$ and the monogamy power $\alpha$.
	
	When the monogamy power is set to unity, i.e., $\alpha = 1$, we find that all QC measures show similar features, where the $\mathcal{G}_c$ decreases with $N$, hitting zero for $N = 5$, see Fig. \ref{g_c_a}. For  $\alpha$  values different from unity, we get varying responses of $\mathcal{G}_c$, as seen in Fig. \ref{g_c_a}. 
	
	To associate \(\mathcal{G}_c\) with \(\alpha\) in monogamy score, for a fixed multipartite entanglement content of a state, we find a critical exponent 
	 beyond which, the monogamy score is always non-negative. We denote it by $\alpha_C$. To enunciate its variation with GGM and its dependence on the number of parties, we consider negativity and discord as the correlation measures. Based on the observations from Fig. \ref{alpha_C}, we note the following points:
	\begin{enumerate}
		\item States which require a high value of $\alpha$ to satisfy the monogamy inequality are present for low number of parties and the number of such states decreases significantly for $N \geq 5$.
		
		\item $\alpha_C \geq 1$ exist only for very low $\mathcal{G}$. This is because, states possessing significant  genuine multipartite correlations are monogamous over a large range of the exponent.
		
		\item Near the tail of the GGM spectrum, where states are strongly quantum correlated, $\alpha_C$ is low, even for low number of qubits which is nicely depicted in Fig. \ref{alpha_C} for three-qubits.
		
		\item Non-monogamous states are mostly observed for $\alpha \leq 1$ for all multi-qubit regimes, independent of the choice of QC measures.
		
	\end{enumerate}

	\begin{figure}[h]
		\centering
		\includegraphics[width=\linewidth]{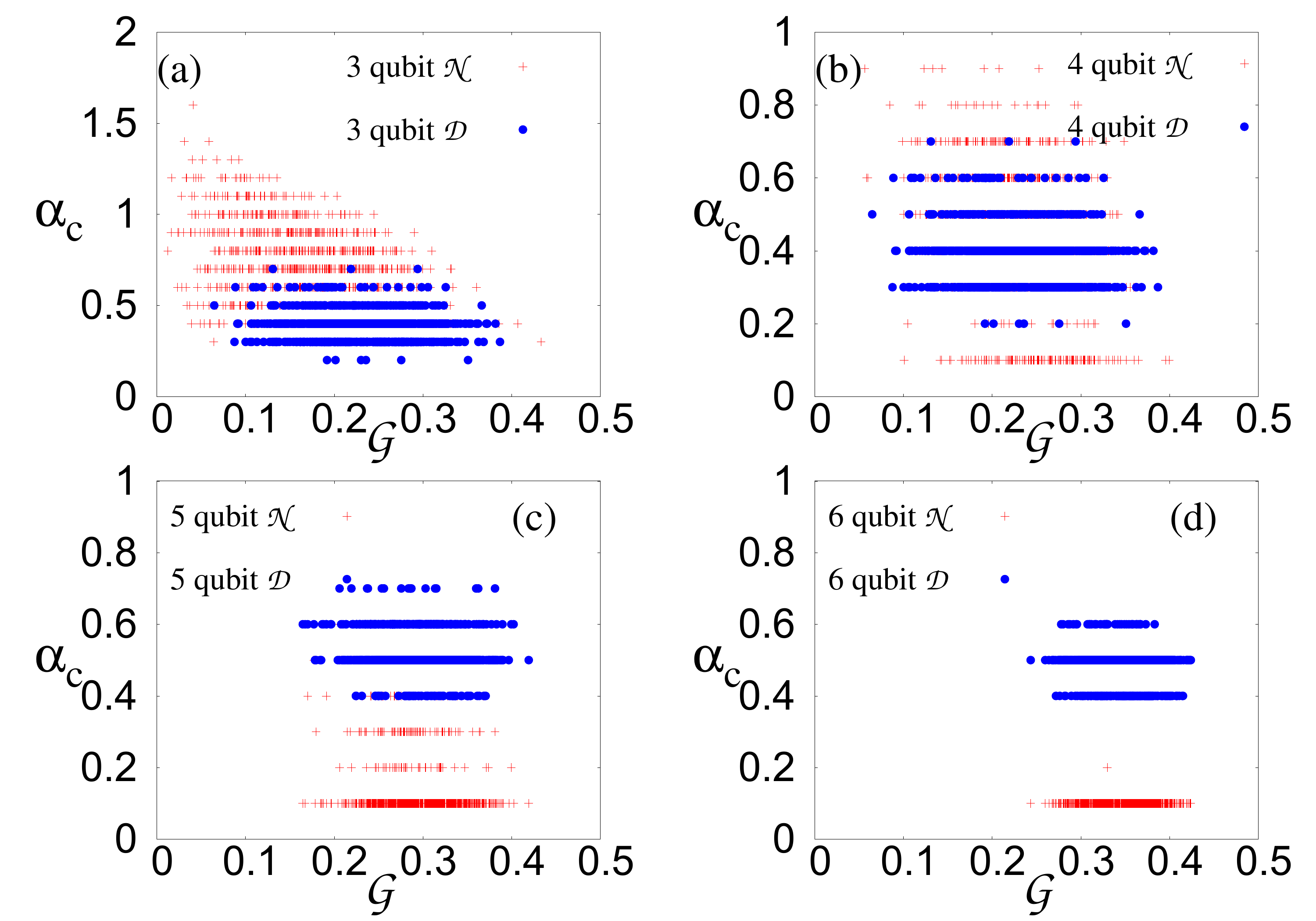}
		\caption{Critical exponent, $\alpha_C$  (see text for definition) along ordinate against $\mathcal{G}$ along abscissa both for discord  (solid circles) and negativity (pluses) monogamy scores. All other specifications are same as in Fig. \ref{fig_ggm_3456}. 
		}
		\label{alpha_C}
	\end{figure}
	
\begin{figure*}
\centering
\includegraphics[width=0.85\linewidth]{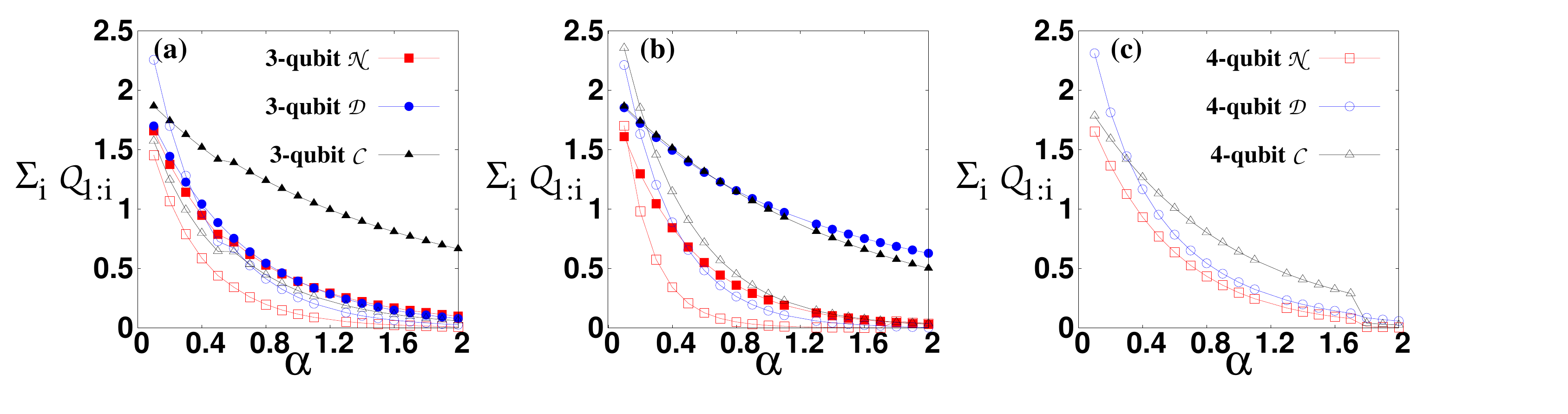}
\caption{\textcolor{black} {Plot of $\sum_i \mathcal{Q}_{1:i}$ (ordinate) for three-qubit (solid points) and four-qubit (hollow points) states
against $\alpha$  (abscissa) in case of negativity $\mathcal{N}$ (squares), discord $\mathcal{D}$ (circles), and concurrence $\mathcal{C}$ (triangles). The states considered are (a) Haar uniformly generated random states, (b) Dicke states with single excitation, and (c) Dicke-states with two excitations.  The vertical axis is in ebits in case of entanglement (negativity and concurrence), and in bits in case of discord while the horizontal axis is dimensionless.}  }
\label{monosum}
\end{figure*}

 \subsection{Maximum of the sum of bipartite QC measures}

To understand  the criticalities deeply, let us consider the actual maximum of \(\sum_{i=2}^{N} \mathcal{Q}_{1i}\) and its difference with the algebraic maximum. It is clear from previous investigations that the  monogamy-based bound is too stringent when \(\alpha\) is small. However, the sum of bipartite QCs  is still  lower than the sum of the individual maxima, i.e.  \(\sum_{i=2}^{N} \mathcal{Q}_{1i} < (N-1)\) for any QC measure in a qubit-scenario.  Since the monogamy score is negative for low values of $\alpha$, the above quantity has a substantial strength at those values and decreases sharply with a rise in the exponent. With higher number of parties in the parent multipartite state, it decreases for moderate to high $\alpha$ and also drops down to zero much more rapidly. For negativity, it is lower than that of concurrence. Our observations are illustrated in Fig. \ref{monosum} for random states, and Dicke states.


\subsection{Significance of the monogamy exponent $\alpha$}
\label{subsec:sig_alpha}

\textcolor{black}{Monogamy is a quantum property which can be demonstrated by using certain quantum correlations as was initially proposed by Coffman, Kundu, and Wootters \cite{CKW}. At the same time, $\delta_{\mathcal{Q}^\alpha} = \mathcal{Q}^{\alpha}_{1:\mbox{rest}} - \sum_{i=2}^N \mathcal{Q}_{1:i}^\alpha$, which we refer to as the monogamy score of $\mathcal{Q}$, can be used to distinguish classes of states. For example, $\delta_{\mathcal{Q}^\alpha}$ with squared concurrence (i.e., $\alpha = 2$ and $\mathcal{Q} \equiv~\text{concurrence}$) vanishes for all states from the W-class while it is non-vanishing for the GHZ-class states, thereby differentiating between two SLOCC inequivalent classes \cite{DurVidalCirac}. In this case, both classes satisfy the monogamy inequality. \\
In the case of quantum discord \cite{ZurekDiscord}, W-class states always violate the inequality, and are thus referred to as non-monogamous states in literature \cite{discrevus}. Moreover, we find that the monogamy score shifts from being negative to positive with an increase in the number of parties, independent of the correlation measure chosen. This may be attributed to an increase in inherent quantum correlations present in the states. Only states with low value of genuine multipartite entanglement (GME), possess a negative monogamy score when a large number of parties are involved. This suggests that we can gather an idea about the amount of correlation present in a state by examining its monogamy score.\\
The monogamy inequality considered with the square of the concurrence i.e., $\alpha = 2$. Thus, it is important to extend the idea to other powers of different quantum correlation measures and study the behavior of states in terms of the monogamy inequality. It was shown  \cite{salini} that, for a given QC measure, there always exists an exponent for which all states satisfy the monogamy inequality. The work by Rethinasamy et. al. \cite{unimonoSappy} found two such values of $\alpha$, which provided  bounds on the exponent characterising monogamous and non-monogamous states. This establishes the fact that the behavior of states as well as correlations depends heavily on the exponent considered in the monogamy inequality.\\
In our work, we try to explore the performance of states in terms of the monogamy criterion, for $\alpha$ values different from the bounds defined in the aforementioned article. We observe a dependence of the monogamy inequality on the exponent as well as on the number of parties comprising the states. Thus for a given class of states and a fixed quantum correlation measure, the monogamy inequality also becomes a function of the exponent. For example, we observe that for Haar uniformly generated states, the monogamy inequality is satisfied by almost all the states, whereas for states belonging to the W-class, a much higher value of $\alpha$ is required to satisfy the same inequality. This feature shows that monogamy can be well characterised by the exponent in the inequality. \\}

	\section{Interplay between Measurement-,  geometry- and monogamy-based quantum correlations}
	\label{sec_LE}
	
	Let us now move to relate  measure-based QCs  with both the monogamy-based QC measures and geometric measure of entanglement. The measurement-based measures as well as geometric measures quantify QCs in an active way while monogamy-based measures do the job in a passive way as explained in the introduction. This is due to the fact that
	instead of tracing out $N-2$ parties and looking at algebraic combinations of bipartite QCs, we now shift our attention to quantum correlations which are obtained by employing optimal local  projective measurements on the $N-2$ qubits of the $N$-qubit state. 
These local measurements concentrate the global correlations of the state into a particular bipartite pair and are known as localizable correlations \cite{entass, LC1, LC2}. Therefore, the localized bipartite correlations  have potential to capture quantumness distributed in  multipartite states  \cite{EoAGour,  Ratul}.

Since we want to relate measurement-based QC measures with the monogamy-based one,  we introduce a localized version of QC measure, $\mathcal{Q}$, with a power $\alpha$, denoted by $\mathcal{L}\mathcal{Q}^\alpha$, when the local measurements are performed in the all the parties except first two parties $1$ and $2$, and for a multipartite pure state, \(|\psi_N\rangle\) and given QC measure, \(\mathcal{Q}\), it can mathematically be represented as 
\begin{eqnarray}
\mathcal{L}\mathcal{Q}^\alpha(|\psi_N\rangle) = \max_{\{\Pi\}} \sum_{k = 1}^{2^{N-2}} p_k \mathcal{Q}^{\alpha}(|\phi_k\rangle), 
\end{eqnarray}
where $\{\Pi\}$ denotes the set of local rank-1 projective measurements on the $N-2$ qubits,  the binary equivalent of $k$ is a particular outcome combination of the $N-2$ qubit projectors,  and $|\phi_k\rangle$ is the normalized post measurement state for the $k^{\text{th}}$ outcome with $p_k$ being the corresponding probability.
We report the connection of $\mathcal{L}\mathcal{Q}^\alpha$ with \(\mathcal{G}\) and  $\delta_{\mathcal{Q}^{\alpha}}$, as well as the variation of $\mathcal{L}\mathcal{Q}^\alpha$  with the power, $\alpha$. For concurrence, negativity and discord  as QC measures, the respective localized versions are  denoted by \(\mathcal{L}\mathcal{C}^\alpha(|\psi_N\rangle)\), \(\mathcal{L}\mathcal{N}^\alpha(|\psi_N\rangle)\), and \(\mathcal{L}\mathcal{D}^\alpha(|\psi_N\rangle)\). \\
\textcolor{black}{Localisable entanglement and GGM seem to be two unrelated quantities by definition. For example, if we consider $|\psi\rangle = |\phi^+\rangle \otimes |0\rangle$, where $|\phi^+\rangle = (1/\sqrt{2})(|00\rangle + |11\rangle)$, then a $\sigma_z$ measurement on the third party and subsequent tracing out, we obtain a maximally entangled state, even though the GGM of $|\psi\rangle$ is zero. On the other hand, a $\sigma_x$ measurement on the third party of the three-qubit GHZ state, and subsequent tracing out also furnishes a maximally entangled state and the GGM of the GHZ state is also maximal having a value of $0.5$. Therefore, in case of both vanishing GGM as well as very high GGM, it is possible to localise high amount of entanglement. Thus,  there seems to be no relation between the two quantities if we consider extreme situations. The main idea of this section is to show that  the situation is not so despairing if one considers random states. }

\begin{figure}[h]
		\centering
		\includegraphics[width=\linewidth]{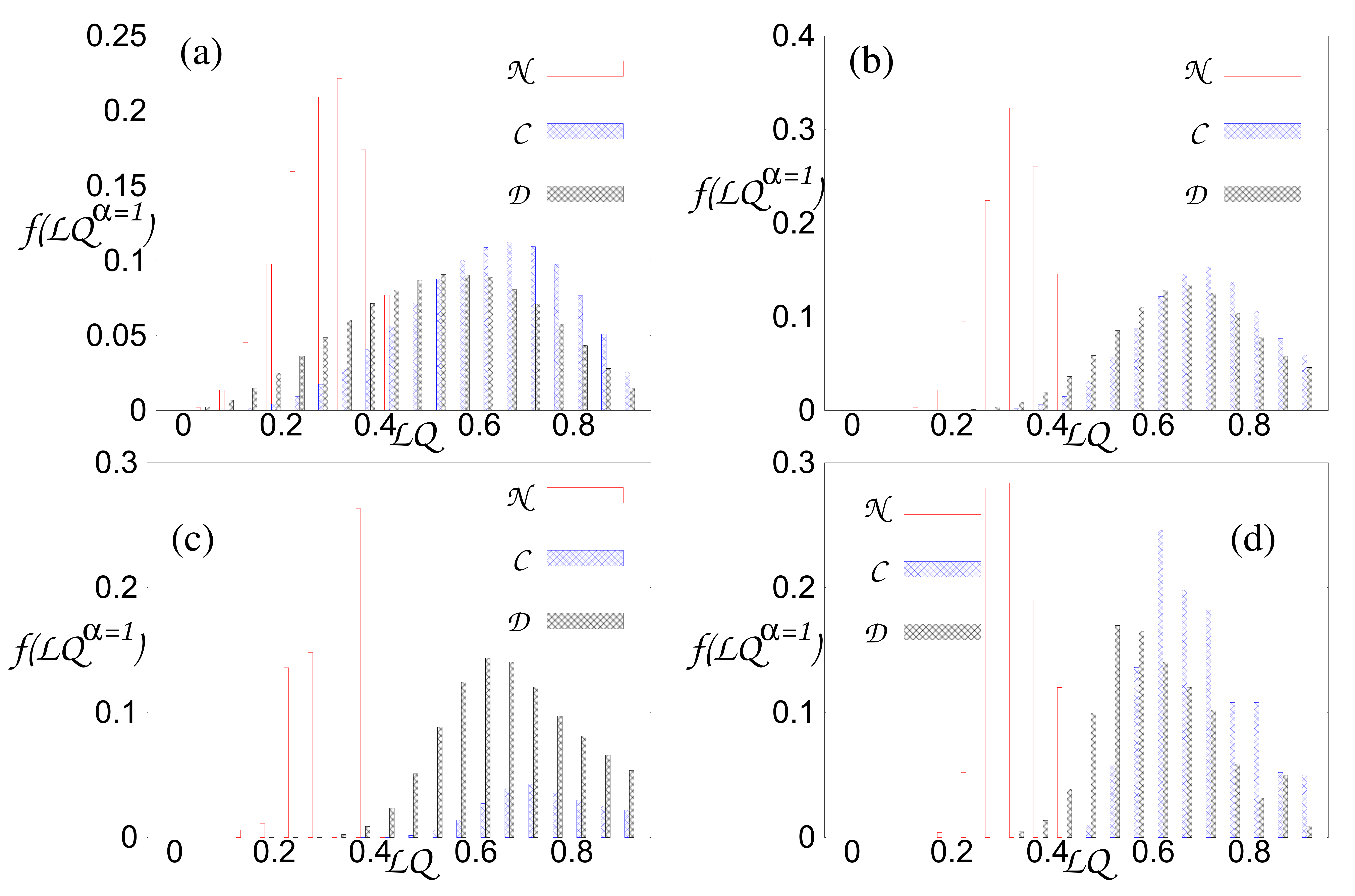}
		\caption{\textcolor{black}{ \(f(\mathcal{LQ}^{\alpha=1})\)(ordinate)  vs. localizable correlation $\mathcal{LC}$. The QCs localized here are negativity $\mathcal{N}$ (blank),  concurrence $\mathcal{C}$ (light), and  discord $\mathcal{D}$ (dark) for random states of three-qubits (a), four-qubits (b), five-qubits (c) and six-qubits (d).
All other specifications are the same as in Fig. \ref{fig_ggm_3456}. The abscissa is in ebits (for negativity and concurrence) or bits (for discord) whereas the ordinate is dimensionless.} 		
		}
		\label{lggm_3456}
	\end{figure}

%
	
	\subsection{Relation of $\mathcal{L}\mathcal{Q}^\alpha$  with $\delta_{Q^\alpha}$ and GGM}
	

\emph{Frequency distribution of localizable QCs. } Before performing this relational analysis in a systematic way, let us study the frequency distribution of $\mathcal{L}\mathcal{Q}^{\alpha =1}$  (see \cite{Ratul} for entanglement of formation). We find that  like monogamy score and GGM, the shape of the distribution for random states is bell-like and it shifts towards its algebraic maximum with the increase of $N$  and becomes  sharper with $N$ since the average value of   $\mathcal{L}\mathcal{Q}^{\alpha =1}$ increases and SD decreases with the increase of number of parties as shown in Fig. \ref{lggm_3456}. The observation is independent of the choice of QC measures and for different values of \(\alpha\) for Haar uniformly generated random states. The opposite picture emerges for  the Dicke states with low excitations ---  $\langle \mathcal{L}\mathcal{Q}^{\alpha =1} \rangle$ decreases with $N$, i.e., the distribution shifts towards the low value of the respective measure for high $N$ although the width of the distribution decreases with the increase of number of parties. However, the increase of mean with $N$ is much slower than the one observed for GGM. For example, the average obtained for negativity and discord  for Haar uniformly generated states are respectively \(0.337, 0.378, 0.397\) and \(0.58, 0.714, 0.727\) with \(N=3, 4, 5\) (compare them with Table \ref{msdTGHZ}).  With the increase of \(\alpha\), mean decreases and SD increases both for random and Dicke states. 

\emph{ Relation of measurement-based QCs with generalized geometric measure as well as monogamy score.} In stark contrast to the relation of monogamy score and GGM, measurement-based QCs  behave differently with GGM and monogamy score. Specifically, in \((\mathcal{G} (\delta_{\mathcal{N}^1}), \mathcal{L}\mathcal{Q}^{\alpha =1})\)-plane, random states are scattered, thereby showing that states with low GGM (monogamy score) can result with high amount of localizable entanglement and at the same time, states with high multipartite entanglement are able to localize small amount of entanglement  as depicted in Figs. \ref{fig_c3456} and \ref{fig_conc_ggm_score}, irrespective of number of parties. Such a picture only changes when we consider the Dicke state with a single excitation which only displays a triangular structure, thereby showing a forbidden region in that plane. It implies that although states having low GGM can concentrate high localizable entanglement, a state with high GGM can always produce moderate amount of entanglement for Haar uniformly generated Dicke states. 


Next we will   argue that the localizable entanglement (measured either by concurrence or negativity) can have substantial value for sufficiently small GGM in case of random three-qubit states as well as three-qubit $|\psi_D^1\rangle$. \\

	\begin{proposition}
For arbitrary three-qubit pure states, localizable entanglement can have a moderately high value even when the genuine multipartite entanglement content of the state is  small. 
\end{proposition}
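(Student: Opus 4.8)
The plan is to establish the statement by exhibiting explicit three-qubit pure states whose GGM is small while the localizable entanglement across the unmeasured pair \(\{1,2\}\) stays bounded away from zero. Because \(\mathcal{L}\mathcal{C}\) and \(\mathcal{L}\mathcal{N}\) are defined through a maximization over projective measurements on the spectator qubit \(3\), and we only need to show that they are \emph{large}, it suffices to \emph{lower bound} them: I would fix one convenient measurement basis on party \(3\) and compute the probability-weighted bipartite entanglement of the two pure post-measurement states of \(1\) and \(2\). Since those outputs are two-qubit pure states, their negativity and concurrence differ only by the constant factor of \(2\) noted earlier for two-qubit pure states, so it is enough to carry the estimate out for concurrence and rescale at the end.

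First I would recall from Eq.~\eqref{Eq:GGM-final-express} that for three qubits \(\mathcal{G}=1-\max_i\lambda_i\), with \(\lambda_i\) the larger eigenvalue of the single-qubit reduced state \(\rho_i\); hence a small \(\mathcal{G}\) forces at least one qubit to be nearly decoupled. The conceptual heart of the argument is that this decoupling need not involve the target pair: if instead the spectator qubit \(3\) becomes nearly pure, the state approaches a product \(|\phi\rangle_3\otimes|\chi\rangle_{12}\) in which \(|\chi\rangle_{12}\) may be maximally entangled. Concretely I would analyse the one-parameter family \(|\psi(\theta)\rangle = \tfrac{\cos\theta}{\sqrt{2}}\,(|000\rangle+|110\rangle)+\sin\theta\,|111\rangle\), which is genuinely tripartite for \(\theta\in(0,\pi/2)\) but collapses to \(\tfrac{1}{\sqrt{2}}(|00\rangle+|11\rangle)_{12}\otimes|0\rangle_3\) as \(\theta\to0\).

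The next step is to compute \(\mathcal{G}(|\psi(\theta)\rangle)\) explicitly: \(\rho_3\) has its larger eigenvalue \(\lambda_3\to1\) as \(\theta\to0\) while \(\rho_1,\rho_2\) remain close to maximally mixed (\(\lambda_1,\lambda_2\to\tfrac12\)), so the maximal Schmidt coefficient is supplied by the \(3\!:\!12\) cut and \(\mathcal{G}\to0\). Measuring qubit \(3\) in the computational basis then leaves \(1,2\) in \(\tfrac{1}{\sqrt2}(|00\rangle+|11\rangle)\) with probability \(\cos^2\theta\) (concurrence \(1\)) and in \(|11\rangle\) otherwise (concurrence \(0\)), giving the valid lower bound \(\mathcal{L}\mathcal{C}\geq\cos^2\theta\), which is close to unity precisely in the small-\(\mathcal{G}\) regime; the negativity bound follows by the same constant factor. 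I would then repeat the argument for the single-excitation (generalized W) state \(|\psi_D^1\rangle=d|100\rangle+c|010\rangle+b|001\rangle\): a small weight \(b\) on the branch coupling to qubit \(3\) drives \(\lambda_3\to1\) and hence \(\mathcal{G}\approx|b|^2\to0\), while a computational-basis (or suitably adapted) measurement on qubit \(3\) localizes concurrence \(\approx 2|c||d|\) between \(1\) and \(2\), moderately high whenever \(|c|\approx|d|\) --- consistent with the small average GGM recorded for this class in Table~\ref{msdTD}.

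The main obstacle is conceptual rather than computational: one must guard against the intuition that proximity to the set of non-genuinely-entangled states should force the localizable entanglement to be small. The resolution is to ensure that the vanishing of \(\mathcal{G}\) is produced by the decoupling of a spectator-type cut rather than of the target pair --- had qubit \(1\) or \(2\) been the nearly pure one, the localizable entanglement between \(1\) and \(2\) would necessarily collapse. Keeping \(\{1,2\}\) strongly correlated while a spectator cut becomes product is exactly what sustains the gap between small \(\mathcal{G}\) and large \(\mathcal{L}\mathcal{Q}\), and the only technical care needed is to verify that the single fixed measurement used already witnesses a large value, so that the full optimization in the definition of \(\mathcal{L}\mathcal{Q}^\alpha\) never has to be solved.
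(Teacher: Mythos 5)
Your proof is correct, and it takes a genuinely different route from the paper. The paper works with the full Ac\'in--type Schmidt decomposition of an arbitrary three-qubit pure state, lower-bounds $\mathcal{LC}$ by the value $2a_1a_4$ obtained from Pauli-eigenbasis measurements on qubit $3$, and then inverts the single-qubit reduced eigenvalues (after dropping the residual terms $f_i(a_i)$) to exhibit an approximate functional relation between $\mathcal{LC}_\sigma$ and a modified GGM, from which it reads off that $\mathcal{G}_1\leq 0.1$ is compatible with $\mathcal{LC}\gtrsim 0.6$. You instead prove the (existential) claim by explicit one-parameter families, and your computations check out: for $|\psi(\theta)\rangle$ the $3\!:\!12$ cut supplies the maximal Schmidt coefficient $\to 1$ while $\rho_1,\rho_2\to\tfrac12\mathbf{1}$, so $\mathcal{G}\to 0$ with the state remaining genuinely entangled for $\theta\in(0,\pi/2)$, and the computational-basis measurement gives $\mathcal{LC}\geq\cos^2\theta$ (in fact your family sits inside the paper's parametrization with $a_1=a_4=\cos\theta/\sqrt2$, and the paper's formula $2a_1a_4$ reproduces your bound exactly); the W-class example is likewise correct. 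What each approach buys: yours is more elementary and fully rigorous for the stated proposition --- it needs no approximation and no claim about the optimal measurement, only a lower bound from one fixed basis --- and it isolates the correct mechanism (the near-purity must occur on the spectator cut, not on qubit $1$ or $2$). The paper's argument, though it leans on the heuristic step of ignoring $f_i(a_i)$ and on $\mathcal{G}_1\geq\mathcal{G}$, attempts something broader: a quantitative $\mathcal{G}$--$\mathcal{LC}$ relation over the whole state space that explains why the scatter plots in Figs.~\ref{fig_c3456} and \ref{fig_conc_ggm_score} are populated generically, not just along special families. If you wanted to match that, you would need to argue that your construction is not measure-zero, e.g.\ by perturbing your examples or by noting that your bound degrades continuously in the coefficients.
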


\begin{proof}
The Schmidt decomposition for a tripartite pure state is given by \cite{acinschmidt}
	\begin{eqnarray}
		| \psi_3 \rangle &= & a_1 |000\rangle + a_2 \exp^{i \phi} |100\rangle + a_3|101 \rangle + \nonumber \\
		&& a_4|110\rangle + a_5|111\rangle
\label{3schmidt}
	\end{eqnarray}
	where all parameters are real and positive semidefinite with $0 \leq \phi \leq 2\pi$ and $\sum_i a_i^2 = 1$.
By performing projective measurement on the third qubit of $|\psi_3 \rangle$, the localizable concurrence of the remaining two qubits is given by $2\sqrt{\det(\rho_1)}$ where $\rho_1 = Tr_{2,3}|\psi \rangle_M \langle \psi|$, with $|\psi\rangle_M$ denoting the post-measurement state.
Suppose that  \(\mathcal{LC}\) achieves its optimum value due to measurements along the $X$, $Y$ or $Z$ direction, i.e. in the eigenvectors of \(\sigma_i,\, i=x, y, z\). Incidentally, for all three cases, the localizable concurrence is given by
\begin{equation}
\mathcal{LC}_{\sigma} = 2a_1a_4.
\label{rhoLC}
\end{equation}
The actual \(\mathcal{LC}\) can be higher than \(\mathcal{LC}_{\sigma}\), i.e.,
 $\mathcal{LC}_{\sigma} \leq \mathcal{LC}$.\\
 
 \begin{figure}[h]
			\centering
			\includegraphics[width=\linewidth]{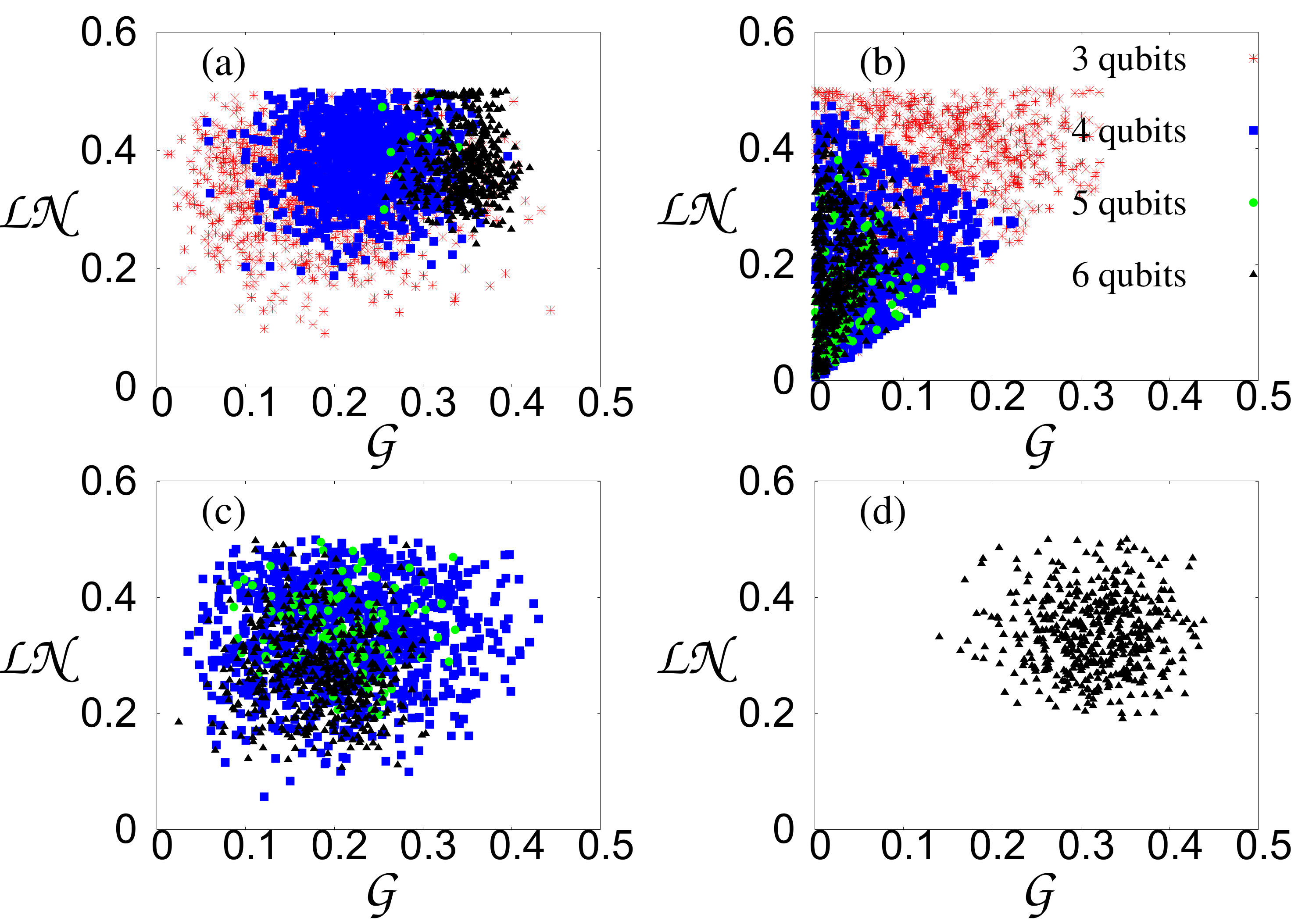}
			\caption{\textcolor{black}{Scattered plot of  localizable negativity, \(\mathcal{LN}\), (\(y\)-axis) against GGM $\mathcal{G}$ (\(x\)-axis) for (a) random states , (b) random  Dicke states with a single excitation, (c) Dicke states with  two excitations, and (b) Dicke states with three excitations for three-qubits (stars), four-qubits (squares), five-qubits (circles) and six-qubits (triangles).  Both the axes are in ebits.} }
			\label{fig_c3456}
		\end{figure}

To obtain $\mathcal{G}(| \psi \rangle)$, we note the eigenvalues of the single qubit reduced density matrices corresponding to the state in Eq. \eqref{3schmidt} to be
\begin{eqnarray}
&& \lambda_1^\pm = \frac{1}{2}(1 \pm \sqrt{1 - 4\mathcal{LC}_{\sigma} - f_1(a_i)}), \\
&& \lambda_2^\pm = \frac{1}{2}(1 \pm \sqrt{1 - \mathcal{LC}^2_{\sigma} - f_2(a_i)}), \\
&& \lambda_3^\pm = \frac{1}{2}(1 \pm \sqrt{1 - \mathcal{LC}^2_{\sigma} - f_3(a_i)}),
\end{eqnarray}
where we have clubbed all the terms which cannot be written in terms of $\mathcal{LC}$ into $f_i(a_i)$. Hence, 
the GGM is $\mathcal{G}(|\psi\rangle) = 1 - \lambda_i^+ = \lambda_i^-$. Depending on the values of the coefficients, any one $\lambda_i^+$ can give be maximum and that contributes to the GGM. By ignoring $f_i(a_i)$ which are typically a very small numbers,  the relationship between the modified GGM and localizable concurrence is found to be
\begin{eqnarray}
&& \mathcal{LC}_{\sigma} = \sqrt{1 - (1 - 2\mathcal{G}_1(|\psi_3\rangle))^2} \; \; \; \; \text{if} \;\lambda_{2,3}^+ \;\;\text{is maximum}, \nonumber \\
&& \mathcal{LC}_{\sigma} = \frac{(1 - (1 - 2\mathcal{G}_1(|\psi_3\rangle))^2)}{2} \; \; \; \; \text{if} \;\lambda_{1}^+ \;\;\text{is maximum} \nonumber, \\
\end{eqnarray}
where \(\mathcal{G}_1 (|\psi_3\rangle) \geq \mathcal{G}(|\psi_3\rangle)\). Analysing the above relations geometrically, we observe, that even for values of $\mathcal{G}_1(|\psi_3\rangle) \leq 0.1$, the \(\mathcal{LC}\) with restricted set of  measurement can be \(0.6\) or even higher. Since the original \(\mathcal{LC}\) can be higher while the GGM can also take a lower value than the actual one, the above argument shows that 
sizeable correlations can be localized even if the original state possesses insignificant multipartite entanglement. 

\end{proof}

\begin{figure}[h]
			\centering
			\includegraphics[width=\linewidth]{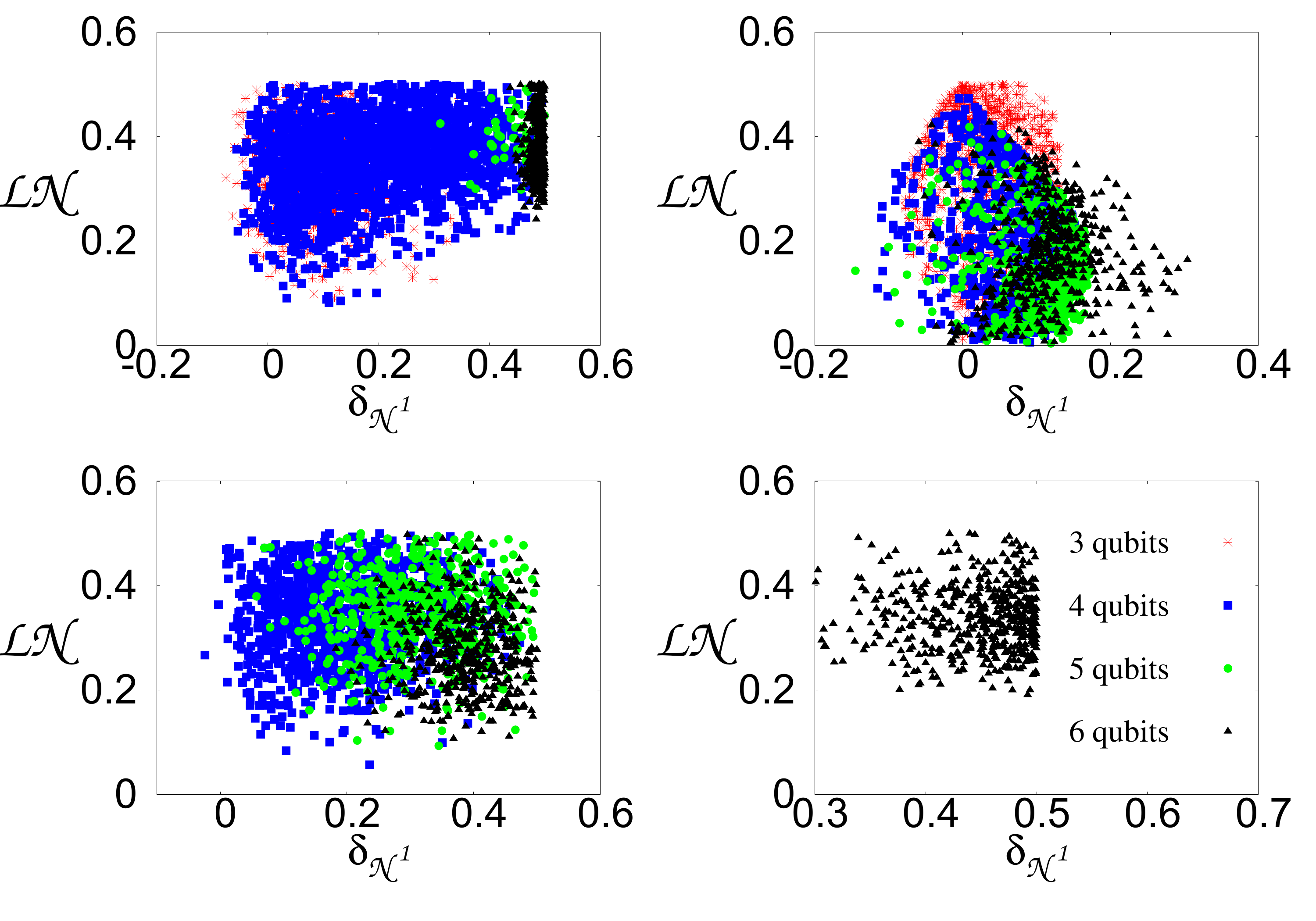}
		\caption{\(\mathcal{LN}\), (vertical) vs. $\delta_{\mathcal{N}^1}$ (horizontal). All other specifications are similar to Fig. \ref{fig_c3456}. }  
			\label{fig_conc_ggm_score}
		\end{figure}

\emph{Remark.} The similar argument can be given to have the relation between localizable negativity and GGM. Since we have already established a relation between GGM and monogamy score, the above results also imply that it is possible to find states having low monogamy score which can produce corresponding high localizable quantum correlations  (see Fig. \ref{fig_conc_ggm_score}).


Like arbitrary three-qubit states, the three-qubit Dicke state with a single excitation having low genuine multipartite entanglement can produce high localizable entanglement as shown in Fig. \ref{fig_c3456}. To show that, let us consider the three-qubit Dicke state, 
\(|\psi_D^1\rangle = a_1 |100 \rangle + a_2 |010 \rangle + a_3 |001 \rangle\) with $\sum_i a_i = 1$. 
In this case, by assuming $a_1,a_2 \geq a_3$, we have
\begin{eqnarray}
&& \mathcal{G}(|\psi_D^1\rangle) = a_3^2 = 1 - a_1^2 - a_2^2 \\
&& \mathcal{LC}_{\sigma}(|\psi_D^1\rangle) = 2a_1 a_2
\end{eqnarray}
Some  algebra then allows us to end up with the relation between $\mathcal{G}$ and the localizable concurrence as
\(\mathcal{LC}_{\sigma} = \mathcal{G}(|\psi_D^1\rangle) + (2a_1 a_2) - a_3^2\).  
The above relation shows the linear dependence of $\mathcal{LC}_{\sigma}$ on $\mathcal{G}$ as depicted in Fig. \ref{fig_c3456}. Since  $a_1,a_2 \geq a_3$, we have $(2a_1 a_2) - a_3^2 \geq 0$ and thus, the dependence of \(\mathcal{LC}_{\sigma} \leq \mathcal{LC} \) on \(\mathcal{G}\) also shows that the localizable concurrence easily exceeds the GGM.
	
	\begin{figure}[h]
		\centering
		\includegraphics[width=\linewidth]{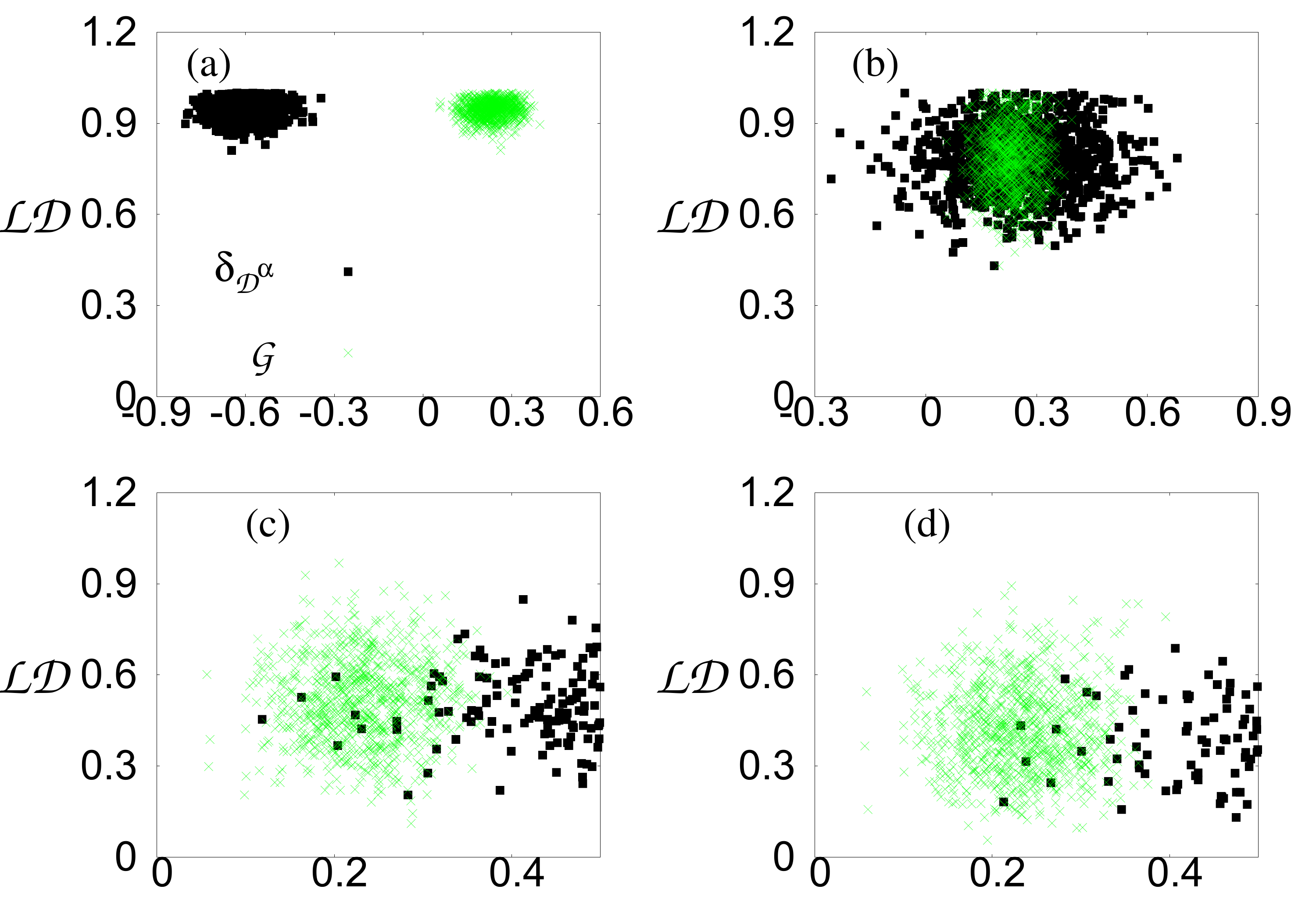}
		\caption{\textcolor{black}{Scattered diagram of localizable discord $\mathcal{LD}$ (ordinate) against $\delta_{\mathcal{D}^\alpha}$ (dark) and $\mathcal{G}$ (light) along abscissa. The choices of  $\alpha$ are (a) 0.1, (b)  0.5, (c) 1.1, and (d) 1.5 where four-qubit  Haar uniformly simulated random states are considered.}}
		\label{3conc_sg}
	\end{figure}	
	
	\begin{figure}[h]
		\centering
		\includegraphics[width=\linewidth]{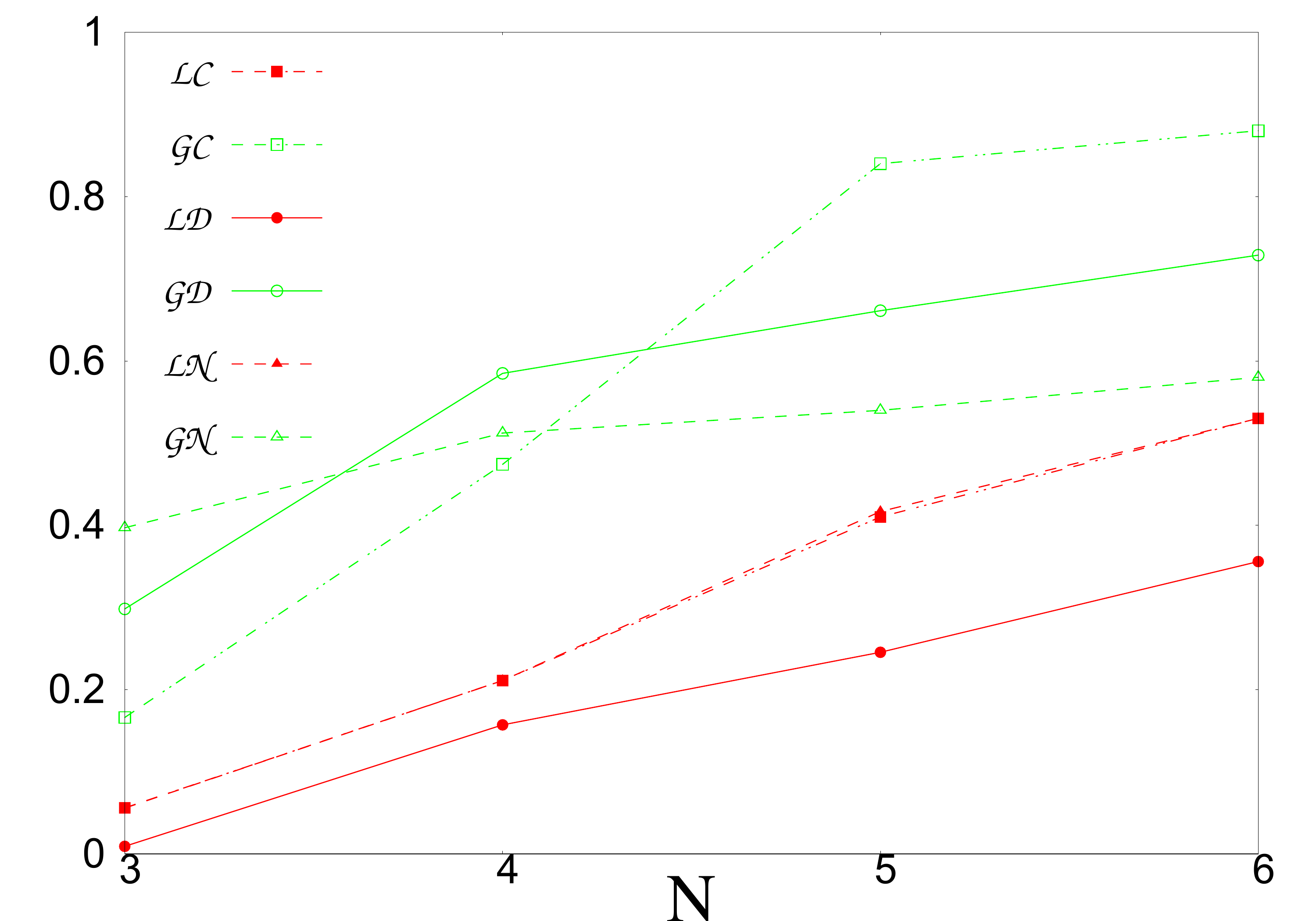}
		\caption{\textcolor{black}{For a fixed $N$ (abscissa), we study the minimum localizable QC (light) that can be obtained and its corresponding GGM (dark) along the ordinate. \(\mathcal{G}x,\, \, x= \mathcal{C}, \mathcal{N}, \mathcal{D}\) denote the GGMs when minimum localizable concurrence ($\mathcal{LC}$), localizable negativity ($\mathcal{LN}$) and localizable discord ($\mathcal{LD}$) are achieved.    and  The \(y\)-axis is in ebits and the \(x\)-axis is dimensionless.} }
		\label{min_c_ggm}
	\end{figure}
	

As discussed qualitatively and also in Proposition 2, the connection between GGM (monogamy score) and \(\mathcal{LQ}\) does not have any definite structure. To make their comparison more quantitative, we consider two situations --- 1. For a fixed $N$, we  find  minimum and maximum localizable QC that can be achieved and the corresponding genuine multipartite entanglement content of a state; 2. for a given range of GGM values, minimum and maximum QC that can be localized are focussed on. In particular,  we report that the GGM at which the minimum of the correlation occurs increases with number of parties for random states, as shown in Fig. \ref{min_c_ggm}. It is due to the fact that among random states, average GGM also increases with $N$.  As  seen from  Table \ref{min_lc_ggm},  to localize nonvanishing QC, a very small amount of GGM is required. Moreover, we observe that to localize minimum amount of QC in the first and the second qubits, the GGM required is always higher than the amount of LQC, i.e., \(\mathcal{LQ}^{\min} < \mathcal{G}\).  
Secondly, if we fix GGM in a certain range, the minimum localizable QC can also follow the similar trend, i.e. to localize QC minimally, the corresponding GGM required for that is substantial. 
On the other hand, for a fixed GGM value, the localizable QCs can always reach their corresponding maximum value. 

\emph{Effects of exponents on QCs in localizable quantity. } We now investigate the effect of varying $\alpha$ introduced in the localizable QCs and we consider the same \(\alpha\) in monogamy score. The trend that we observed in Proposition for arbitrary states or Dicke state remains same by varying \(\alpha\). In particular, 
when we have low \(\alpha <1\), highly localizable entangled states are generated for varying GGM and monogamy score while for high \(\alpha\), low amount of entanglement can be localized. Such observation is possibly artifact  of  the functional form of the QC measures as also the case of monogamy scores (see Fig. \ref{3conc_sg}). 



%
 Similar to Fig. \ref{alpha_C},  we now ask a question: for a given localizable QC, what is critical exponent above which, the monogamy score is always nonnegative? We find that unlike GGM, no such universal picture emerges as we hinted by the relation between localizable QC and GGM. In this case, states with high localizable QC can require high \(\alpha\) to make the states monogamous (see Fig. \ref{fig_LQCvsalpha_C}). Only low \(\alpha\) is required for high $N$ in case of $\delta_{\mathcal{Q}}$, which one expects from the behavior of monogamy score itself. This observation possibly indicates that localizable QC measure has some component which are due to multipartite state but is qualitatively different than multipartite entanglement monotones (cf. \cite{EoAGour}). Notice also that such conclusion may be changed if we alter the definition of localizable QC (cf. \cite{popescu} and references thereto).
 
 \begin{figure}[h]
 	\centering
 	\includegraphics[width=\linewidth]{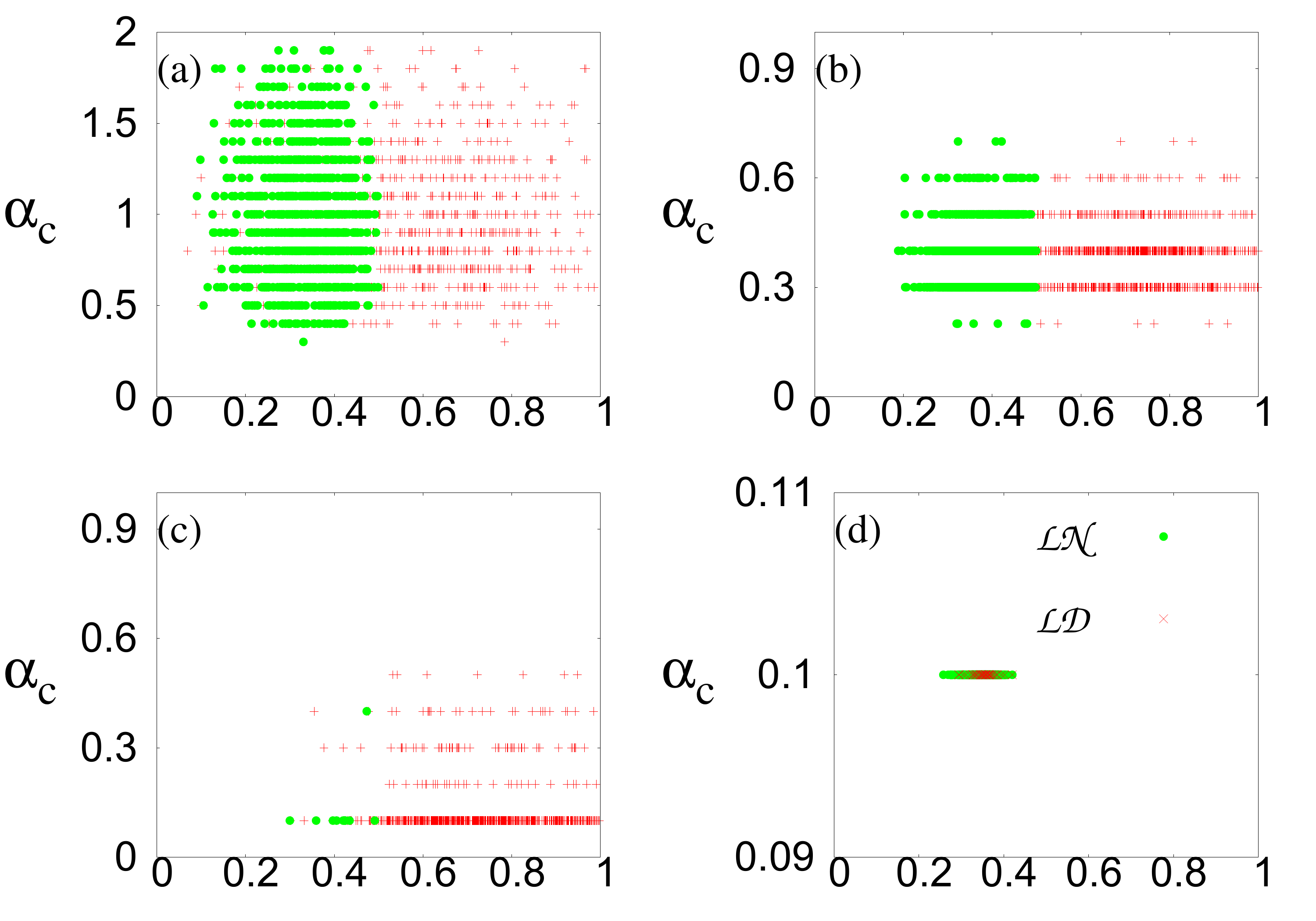}
 	\caption{\textcolor{black}{Critical exponent $\alpha_C$ (ordinate) is plotted against $\mathcal{LN}$ (light) and $\mathcal{LD}$ (dark) for random states of (a) three-qubits, (b) four-qubits, (c) five-qubits, and (d) six-qubits. All other coordinates are the same as in Fig. \ref{fig_ggm_3456}. The x-axis is in ebits for $\mathcal{LN}$ and in bits for $\mathcal{LD}$ whereas the y-axis is dimensionless.}   }
 	\label{fig_LQCvsalpha_C}
 \end{figure}


	\begin{table}[]
		\caption{$\mathcal{G}$ for the corresponding  minimum localizable QCs.} 
		\begin{tabular}{|l|l|l|l|}
			\hline
			N & \multicolumn{1}{c|}{$\mathcal{LC}$} & \multicolumn{1}{c|}{$\mathcal{L}N$} & \multicolumn{1}{c|}{$\mathcal{LD}$} \\ \hline
			&                                     &                                     &                                     \\ \hline
			3 & 0.083                               & 0.198441                            & 0.149041                            \\ \hline
			4 & 0.237                               & 0.256076                            & 0.292371                            \\ \hline
			5 & 0.33                                & 0.259891                            & 0.330505                            \\ \hline
			6 & 0.36                                & 0.27                                & 0.328931                            \\ \hline
		\end{tabular}
	\label{min_lc_ggm}
	\end{table}
	


	\subsection{Deviation from algebraic maximum}

Let us now  investigate the behavior of $\sum_{i=2}^N \mathcal{LQ}_{1i}$ for an N-partite state. The algebraic maximum of this quantity is $(N-1)$ which can be achieved by the GHZ state. However, the actual bound turns out to be quite different for Haar uniformly generated states. We observe that the sum falls short of its algebraic value for all classes of states, especially for high $N$ (see Table \ref{max_le_sum_table}) .

For random three-qubit states, the actual upper bound  is close to its algebraic maximum, i.e.,  $2$ although the difference between algebraic maxima and the maximum obtained numerically  increases with the number of parties. E.g., if we consider concurrence as a quantum correlation measure, the gap is 0.007 for three-qubit random states while it rises to 0.48 in case of six-qubits (see Fig. \ref{le_sum_hist}). In case of Dicke state, the gap turns out to be significant, i.e., it fails to attain the algebraic threshold by a big margin. In this instance too, the difference increases with the number of constituent qubits. E.g., the sum reaches only about half of the algebraic maximum for $| \psi_{D}^1\rangle$. However, with more number of excitations in Dicke states, picture similar to random states develops. 

\begin{table*}[]
	\caption{Maximum of $\sum_{i=2}^N \mathcal{LQ}_{1i}$.} 
	\begin{tabular}{|l|r|r|r|r|r|r|r|r|r|r|r|r|}
		\hline
		& \multicolumn{4}{c|}{$\sum_{i=2}^N \mathcal{LN}_{1i}$}                                                                                               & \multicolumn{4}{c|}{$\sum_{i=2}^N \mathcal{LC}_{1i}$}                                                                                               & \multicolumn{4}{c|}{$\sum_{i=2}^N \mathcal{LD}_{1i}$}                                                                                               \\ \hline
		& \multicolumn{1}{c|}{}       & \multicolumn{1}{c|}{}           & \multicolumn{1}{c|}{}           & \multicolumn{1}{c|}{}           & \multicolumn{1}{c|}{}       & \multicolumn{1}{c|}{}           & \multicolumn{1}{c|}{}           & \multicolumn{1}{c|}{}           & \multicolumn{1}{c|}{}       & \multicolumn{1}{c|}{}           & \multicolumn{1}{c|}{}           & \multicolumn{1}{c|}{}           \\ \hline
		& \multicolumn{1}{c|}{Random} & \multicolumn{1}{c|}{$\psi_D^1$} & \multicolumn{1}{c|}{$\psi_D^2$} & \multicolumn{1}{c|}{$\psi_D^3$} & \multicolumn{1}{c|}{Random} & \multicolumn{1}{c|}{$\psi_D^1$} & \multicolumn{1}{c|}{$\psi_D^2$} & \multicolumn{1}{c|}{$\psi_D^3$} & \multicolumn{1}{c|}{Random} & \multicolumn{1}{c|}{$\psi_D^1$} & \multicolumn{1}{c|}{$\psi_D^2$} & \multicolumn{1}{c|}{$\psi_D^3$} \\ \hline
		& \multicolumn{1}{l|}{}       & \multicolumn{1}{l|}{}           & \multicolumn{1}{l|}{}           & \multicolumn{1}{l|}{}           & \multicolumn{1}{l|}{}       & \multicolumn{1}{l|}{}           & \multicolumn{1}{l|}{}           & \multicolumn{1}{l|}{}           & \multicolumn{1}{l|}{}       & \multicolumn{1}{l|}{}           & \multicolumn{1}{l|}{}           & \multicolumn{1}{l|}{}           \\ \hline
		\multicolumn{1}{|r|}{3} & 0.997                       & 0.707                           &                                 &                                 & 1.993                       & 1.414                           &                                 &                                 & 1.995                       & 1.563                           &                                 &                                 \\ \hline
		\multicolumn{1}{|r|}{4} & 1.481                       & 0.866                           & 1.47                            &                                 & 2.973                       & 1.732                           & 2.94                            &                                 & 2.946                       & 1.654                           & 2.22                            &                                 \\ \hline
		\multicolumn{1}{|r|}{5} & 1.944                       & 1.1                             & 1.917                           &                                 & 3.94                        & 2                               & 3.83                            &                                 & 3.845                       & 1.939                           & 2.465                           &                                 \\ \hline
		\multicolumn{1}{|r|}{6} & 2.34                        & 1.398                           & 2.104                           & 2.22                            & 4.52                        & 2.58                            & 4.17                            & 4.3                             & 4.14                        & 2.435                           & 2.94                            & 4.247                           \\ \hline
	\end{tabular}
	\label{max_le_sum_table}
\end{table*}

\begin{figure}[h]
	\centering
	\includegraphics[width=\linewidth]{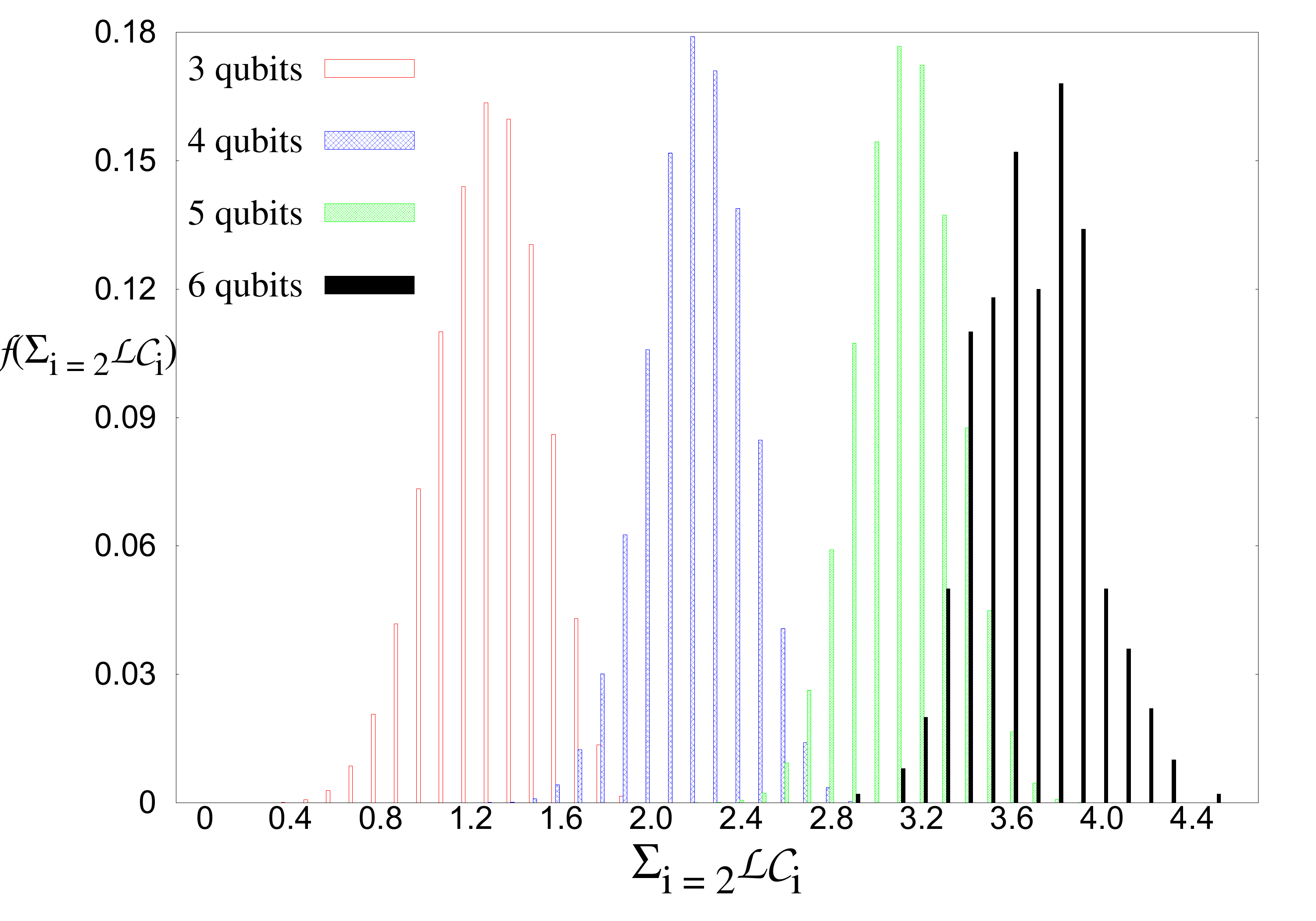}
	\caption{ $f(\sum_{i=2} \mathcal{LC}_i)$  (ordinate) against $\sum_{i=2} (\mathcal{LC})_i$ (abscissa) for three-(red), four- (blue), five- (green) and six- qubit (black) random states. It also shows that although for three-qubits, sum is close to the algebraic maximum, the gap between algebraic maxima and the maximum of the sum obtained via random states increases with the increase of the number of parties. The ordinate is dimensionless while the abscissa is in ebits.  }
	\label{le_sum_hist}
\end{figure}

	\section{Conclusion}
	\label{sec_conclu}
    
     In a multipartite domain,   quantum correlations (QC) even for  pure states cannot be characterized in a unique way. Over the years, several quantifications from different origins  have been  proposed which elucidate specific features of quantum states, important for building  quantum technologies. In this work, we provide a connection between three such independent quantum correlation measures, defined from different perspectives, thereby bringing them under a single umbrella. In particular, we choose a geometry-based entanglement measure quantifying genuine multipartite entanglement, monogamy-based quantum correlation measures with different exponents and measurement-based  measures. Both monogamy- and measurement-based measures are constructed by considering both entanglement and other quantum correlation measures. 
     
     We reported that there exists a critical content of genuine multipartite entanglement above which no multipartite states violate monogamy inequality. Typically, monogamy relations for a quantum correlation  are considered with an exponent that can be thought of as "ad-hoc" \cite{MonoFaithful}.  We find that for a fixed genuine multipartite entanglement, there always exists a critical exponent above which all measures satisfy monogamy relation. For Haar uniformly generated states, such a critical exponent decreases with the increase of number of parties. We also proved that if an arbitrary three-qubit state and generalized Greenberger-Horne-Zeilinger states (gGHZ) possess the same amount of genuine multipartite entanglement, then the entanglement monogamy score of the former is bounded above by that of the latter. Such a hierarchy between random states and gGHZ states does not hold for states with a higher number of parties. The back of the envelope calculations  also reveals that average  genuine multipartite entanglement content of random states with arbitrary number of parties coincides with the Dicke states having half of the sites excited. 
     
     On the other hand, we showed that a state having low multipartite entanglement content can localize a high amount of quantum correlations in two parties and vice-versa. This result indicates that localizable quantum correlations can have some components carrying multipartite characteristics of states although it also highlights the difference between genuine multipartite entanglement and localizable entanglement. Notice that a different process of sweeping entanglement towards two parties than the one considered in this work may show different characteristics.   Interestingly, we observe that the monogamy score of QCs behaves more like multipartite  measures than localizable QCs. Specifically, we observed that states having high localizable entanglement may require a high critical exponent to satisfy the corresponding monogamy inequality, thereby showing its different nature from genuine multipartite entanglement. Moreover, we found that the sum of bipartite localizable entanglement of multipartite random states are bounded above by a quantity which is close to its algebraic maximum and the gap between algebraic and the actual bounds increases with the number of parties for random states while the difference is substantial for Dicke states. 

 Since all the quantum correlation quantifiers have different kinds of importance in quantum information science, the connection established in this paper possibly 
 gives a hint towards choosing the QC measure, depending on their tasks, instead of their amount.

	\section{Acknowledgement}
	
	We acknowledge the support from Interdisciplinary Cyber Physical Systems (ICPS) program of the Department of Science and Technology (DST), India, Grant No.: DST/ICPS/QuST/Theme- 1/2019/23 and SM acknowledges Ministry of Science and Technology in Taiwan (Grant no. 110-2811-M-006 -501). We  acknowledge the use of \href{https://github.com/titaschanda/QIClib}{QIClib} -- a modern C++ library for general purpose quantum information processing and quantum computing (\url{https://titaschanda.github.io/QIClib}) and cluster computing facility at Harish-Chandra Research Institute. 
	
\bibliography{bib1}
\bibliographystyle{apsrev4-1}

\end{document}